\newcommand{\ff}{\mathbb{F}_2}
\newcommand{\hgp}{\text{HGP}}
\DeclareMathOperator{\rank}{\mathrm{rank}}
\DeclareMathOperator{\Span}{\mathrm{Span}}
\DeclareMathOperator{\supp}{\mathrm{supp}}
\newcommand{\ox}{\otimes}
\newcommand{\syhgp}{\mathrm{HGP_{sy}}}
\newcommand{\bline}{\mathcal{B}_{\mathrm{line}}} 
\DeclareMathOperator{\im}{\mathrm{Im}}
\newtheorem{theorem}{Theorem}
\newtheorem{definition}{Definition}
\newtheorem{lemma}{Lemma}
\newcommand{\gates}[1]{\texttt{#1}}  
\newcolumntype{L}{>{$}l<{$}} 
\newcolumntype{C}{>{$}c<{$}} 
\begin{document}

\title{Partitioning qubits in hypergraph product codes to implement logical gates}

\author{Armanda O. Quintavalle}
\affiliation{Department of Physics \& Astronomy, University of Sheffield, Sheffield, S3 7RH, United Kingdom}
\affiliation{Dahlem Center for Complex Quantum Systems, Freie Universität Berlin, 14195 Berlin, Germany}
\author{Paul Webster}
\affiliation{Centre for Engineered Quantum Systems, School of Physics, The University of Sydney, Sydney, NSW 2006, Australia}
\author{Michael Vasmer}
\affiliation{Perimeter Institute for Theoretical Physics, Waterloo, ON N2L 2Y5, Canada}
\affiliation{Institute for Quantum Computing, University of Waterloo, Waterloo, ON N2L 3G1, Canada}

\affiliation{}
\orcid{}
\begin{abstract}
The promise of high-rate low-density parity check (LDPC) codes to substantially reduce the overhead of fault-tolerant quantum computation depends on constructing efficient, fault-tolerant implementations of logical gates on such codes.
Transversal gates are the simplest type of fault-tolerant gate, but the potential of transversal gates on LDPC codes has hitherto been largely neglected. 
We investigate the transversal gates that can be implemented in hypergraph product codes, a class of LDPC codes.
Our analysis is aided by the construction of a symplectic canonical basis for the logical operators of hypergraph product codes, a result that may be of independent interest.
We show that in these codes transversal gates can implement Hadamard (up to logical SWAP gates) and control-Z on all logical qubits.
Moreover, we show that sequences of transversal operations, interleaved with error-correction, allow implementation of entangling gates between arbitrary pairs of logical qubits in the same code block. 
We thereby demonstrate that transversal gates can be used as the basis for universal quantum computing on LDPC codes, when supplemented with state injection.
\end{abstract}

\section{Introduction}
In recent years, quantum computing has transitioned from a theoretical idea to a real technology that is competitive with state-of-the-art classical computing on carefully selected tasks~\cite{Arute}. 
However, realising its potential to solve intractable problems of practical importance requires the development of a fault-tolerant quantum computer---a device that can perform long quantum computations to a very high degree of accuracy even in the presence of noise~\cite{shor1997}. 
Fault-tolerant quantum computing can be achieved by encoding quantum information into quantum error-correcting codes, but only at the cost of substantial overhead. 
For standard fault-tolerant quantum architectures based on the surface code this overhead is prohibitively large for the foreseeable future, with approximately $10^6$ to $10^8$ qubits being required to realise useful applications~\cite{gidney2021factor,Kim}. 

High-rate quantum LDPC codes offer a promising avenue to significantly reducing this overhead~\cite{Breuckmann}. 
These codes move beyond topological codes such as the surface code~\cite{Kitaev} by substituting the requirement that stabiliser check operators are geometrically local with a weaker condition that they must be sparse. They thereby preserve the essential benefits of the surface code for error-correction, while allowing much smaller qubit overheads~\cite{Breuckmann, panteleev2022asymptotically}.

In this work, we focus on hypergraph product codes---a class of high-rate LDPC codes derived from the tensor product of classical LDPC codes. 
Even when restricted to this highly-structured class of codes, it is not yet clear what is the best strategy to reliably perform logic. 
The earliest proposal for fault-tolerant logic on LDPC codes relies on state injection protocols that, while achieving a constant overhead per logical qubit in the asymptotic regime, would have substantial finite-size overheads~\cite{Gottesman}. In~\cite{Krishna, krishna2020topological}, code deformation on hypergraph product codes is used to perform Clifford gates. 
The protocol proposed preserves the LDPC property of the code throughout the entire computation; however, even if in principle the whole Clifford group is implementable via this framework, there is no promise that this is the case for an arbitrary hypergraph product code nor is there a promise on the time cost of each gate implementation. 
Another variation of code deformation, via non-destructive measurement of high-weight logical operators, is proposed in~\cite{Cohen}; the method there proposed enables the implementation of the full Clifford group on all hypergraph product codes. 
Nevertheless, this method requires many ancilla qubits if we want to operate in parallel on all of the encoded qubits.

We take a different approach to fault-tolerance and explore transversal gates~\cite{zeng,Eastin,bravyikoenig,pastawski,jochym2018disjointness,Webster}.
Transversal gates on homological codes~\cite{bravyi14}, close cousins of the hypergraph product codes, were studied in~\cite{Jochym-O'Connor2}.
Universality is there obtained by combining two quantum codes with complementary transversal gates. 
Hence in \cite{Jochym-O'Connor2} the problem of universality is deferred to the one of finding complementary classes of good quantum codes that are then combined via the homological product; the information is always protected by at least one code whilst leveraging the transversal gates of the other. We relax the constraints on the underlying (classical) codes used as seed codes in the hypergraph product and investigate how the core symmetries that arise from the product structure itself enable transversal gates.

The transversal gates we find are only a small subset of the Clifford group. 
Nonetheless, we further the knowledge on hypergraph product codes by proving that it is possible to efficiently construct a canonical basis for their logical space.  The existence of such a canonical basis is non-trivial and may be of independent interest in the development of other computational frameworks on hypergraph product codes or related families of codes (e.g.\ higher dimensional homological product codes~\cite{bravyi14, Quintavalle2, jochym2021four}, or the codes introduced in~\cite{evra2022decodable}).
Furthermore, we illustrate how pieceably fault-tolerant circuits~\cite{Yoder} and state injection can be used on symmetric hypergraph product codes \cite{kovalev2012improved} in conjunction with the transversal gates proposed to give a universal gate set. 
Importantly, our scheme is relevant to small, low-overhead quantum error-correcting codes with practical near-term potential.

After a short introduction to hypergraph product codes in \Cref{sec:hgp}, we detail the existence of the canonical basis for their logical space in \Cref{sec:canonical_basis}.
In \Cref{sec:transversal_gates}, we illustrate how the idea of unfolding the color code into surface codes relies on more general symmetries of the product structure, symmetries that are inherited by square and symmetric codes.
Thanks to this observation and the use of our canonical basis, we showcase the transversal implementation of some Clifford gates on hypergraph product codes. 
We comment on how pieceable fault tolerance and state injection, in combination with the transversal gates introduced, could be used on `small' codes to perform arbitrary computations in \Cref{sec:universal}.

\section{Hypergraph Product Codes}
\label{sec:hgp}
Any binary matrix $H$ in $\ff^{m \times n}$ defines a classical code whose codewords are vectors in the kernel of $H$, $\ker H$. 
The classical code so defined uses $n$ bits to protect from errors $k = n - \rank(H)$ bits and detects all errors of Hamming weight less than $d$, the minimum weight of a codeword. 
Briefly, we say that $H$ defines an $[n, k, d]$ code~\cite{Mackay}. 
Similarly, any pair of binary matrices $H_x$ in $\ff^{m_x \times n}$ and $H_z$ in $\ff^{m_z \times n}$ such that $H_x \cdot H_z^T = 0 \mod 2$ defines a stabiliser CSS code~\cite{calderbank1996good, steane1996multiple, nielsen2002quantum}. 
The stabiliser group is $\mathbf{S}=\langle \mathbf{S_x} \cup \mathbf{S_z}\rangle$, where $\mathbf{S_x}$ is the set of $X$ operators whose support vector\footnote{Given a Pauli operator on $n$ qubits $P = P_1 \otimes P_2 \otimes \dots \otimes P_n \in \mathcal{P}_n$, its support vector is the unique vector $v \in \ff^n$ with $i$th coordinate $v[i]= 1$ if and only if $P_i \neq I$.} is equal to a row of $H_x$, and $\mathbf{S}_z$ is the set of $Z$ operators whose support vector is equal to a row of $H_z$. 
Since an $X$ and a $Z$ operator commute if and only if their supports have even overlap, the orthogonality of $H_x$ and $H_z$ in $\ff$ ensures that the stabiliser group $\mathbf{S}$ is well defined. 
The quantum code with stabiliser group $\mathbf{S}$ uses $n$ physical qubits to encode $k = n - \rank(H_x) - \rank(H_z)$ logical qubits and detects all errors of weight less than $d = \min(d_x, d_z)$, where $d_x$ is the minimum weight of a vector in $\ker H_z$ that is not in the image of $H_x^T$, $\im H_x^T$, and $d_z$ is the minimum weight of a vector in $\ker H_x$ that is not in $\im H_z^T$. 
Briefly, we say that the quantum code is $[\![n, k, d]\!]$.

A hypergraph product code~\cite{tillich14, bravyi14, audoux2015tensor} is a CSS code produced from the hypergraph product of two linear codes. 
Given $H_a$ in $\ff^{m_a \times n_a}$ and $H_b$ in $\ff^{m_b \times n_b}$ we define:
\begin{align}
H_x&=\left(H_a\otimes I_{n_b} \quad I_{m_a}\otimes H_b^T\right),\label{eq:h_x} \\
H_z &=\left( I_{n_a}\otimes H_b  \quad H_a^T\otimes I_{m_b}\right),
\label{eq:h_z}
\end{align}
where $\otimes$ is the tensor product.
Since $H_x \cdot H_z^T = 2 H_a \ox H_b^T = 0 \mod 2$, this choice is valid and we indicate by $\hgp(H_a, H_b)$ this CSS code. 
If $H_{\eta}$ (resp. $H_{\eta}^T$) define a $[n_{\eta}, k_{\eta}, d_{\eta}]$ (resp. $[m_{\eta}, k_{\eta}^T, d_{\eta}^T]$) classical code, then $\hgp(H_a, H_b)$ has parameters 
\begin{align}
\label{eq:parameters}
[\![n_an_b+ m_am_b, k_ak_b + k_b^Tk_a^T, d]\!],
\end{align}
where $d = \min(d_a, d_a^T, d_b, d_b^T)$.

The physical qubits of $\hgp(H_a, H_b)$ can be arranged in two rectangular grids of sizes $n_a\times n_b$ and $m_a \times m_b$, see \cref{fig:logical_qubits_enumeration} and~\cite{quintavalle2022reshape}. 
As such, we enumerate the physical qubits of $\hgp(H_a, H_b)$ by the triplet $(i,h,L)$ and $(j, \ell, R)$, where $1\le i \le n_a$, $1\le h \le n_b$, $1 \le j \le m_a$, $1 \le \ell \le m_b$. 
The first two coordinates of each triplet refer to the row and column positions of the physical qubit in the grid. 
The third coordinate $L$ or $R$, short for left and right respectively, distinguishes the two grids and is referred to as \emph{sector} of the physical qubit. 
Via this spacial mapping of physical qubits, each stabiliser generator will have support contained in a row of one sector and a column of the other. 
Specifically, elements of $\mathbf{S_x}$ can be indexed by a pair $j,h$, such that $S_{x}(j,h)$ has support on qubits on row $j$ of the right sector and column $h$ of the left sector. 
Elements of $\mathbf{S_z}$, indexed by a pair $i, \ell$ are similar, but have support on rows of qubits in the left sector and columns in the right sector.

We say that a hypergraph product code is a \emph{square} code if it is derived from one classical matrix only. 
Square codes $\hgp(H, H)$ take their name from the shape of the two grids of physical qubits: if $H \in \ff^{m \times n}$, the left and right sectors are square grids of sizes $n \times n$ and $m \times m$ respectively. 
A square code $\hgp(H, H)$ such that $H = H ^T$ or, more loosely, such that $H \in \ff^{n \times n}$ is square and $\im(H) = \im(H^T)$, is said \emph{symmetric} and denoted as $\syhgp(H)$.

\subsection{Logical Pauli Operators}
\label{sec:canonical_basis}
Logical $X$ and $Z$ operators of $\hgp(H_a, H_b)$ can be chosen to have support on a `line' of qubits, meaning that each operator acts non-trivially either on the left or right sector, and either on a column or on a row of physical qubits~\cite{kovalev2013quantum, bravyi14, quintavalle2022reshape}. 
We build on this and show that such a `line basis' can, in addition, be chosen to be symplectic. 
The existence of such canonical basis is the key ingredient in the identification of transversal gates on hypergraph product codes. 
Formally:
\begin{theorem}
\label{thm:bline}
For any hypergraph product code there exist bases $\bline^x$ and $\bline^z$ of logical $X$ and $Z$ operators respectively such that:
\begin{enumerate}
    \item Any operator in $\bline^x$ or $\bline^z$ has support on a `line' of qubits; by line here we intend that the support of the operator is contained in either a column or a row of left or right sector qubits, when qubits are arranged on two grids as explained above.
    \item For any operator in $\bline^x$ there exists one and only one operator in $\bline^z$ that anticommutes with it. More precisely for every pairs of operators, one in $\bline^x$ and one in $\bline^z$, either their support overlaps on exactly one qubit or does not overlap at all.
\end{enumerate}
We refer to any such pair of bases as canonical basis for $\hgp(H_a, H_b)$.
\end{theorem}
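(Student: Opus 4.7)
The plan is to exploit the natural splitting of the logical space of $\hgp(H_a,H_b)$ into a \emph{left-sector} block of dimension $k_ak_b$ and a \emph{right-sector} block of dimension $k_a^Tk_b^T$, and then to build a symplectic line basis for each block independently. Since the two sectors live on disjoint sets of qubits, any left-supported logical commutes with any right-supported one, so the symplectic problem decouples. I describe the left sector; the right sector follows by an entirely symmetric argument (swapping the roles of $H_a,H_b$ with $H_a^T,H_b^T$ and interchanging rows with columns).

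A direct calculation using \eqref{eq:h_x} and \eqref{eq:h_z} shows that the left-sector $X$-logical representatives form a space canonically isomorphic to $(\ff^{n_a}/\im H_a^T)\otimes \ker H_b$, while the left-sector $Z$-logical representatives form $\ker H_a\otimes (\ff^{n_b}/\im H_b^T)$; each has dimension $k_ak_b$. I would then verify that the left-sector and right-sector logical subspaces together span $\ker H_z/\im H_x^T$, which reduces to the tensor identity $(\im H_a^T\otimes \ff^{n_b})\cap(\ff^{n_a}\otimes \ker H_b)=\im H_a^T\otimes \ker H_b$, the right-hand side being precisely the left-supported stabilisers.

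The crux of the construction is to pick an index set $\mathcal{F}_a\subset[n_a]$ of size $k_a$ that is an information set for $\ker H_a$---for example, the non-pivot columns of a reduced row echelon form of $H_a$---and likewise $\mathcal{F}_b\subset[n_b]$ for $\ker H_b$. By the standard duality $\im H_a^T=(\ker H_a)^{\perp}$, any such $\mathcal{F}_a$ automatically satisfies $\im H_a^T\cap \ff^{\mathcal{F}_a}=0$, i.e.\ $\mathcal{F}_a$ is \emph{also} an information set for $\ff^{n_a}/\im H_a^T$; this doubly information-set property is what powers the symplectic pairing. The information-set property then lets me choose unique dual bases $\{c_X^{(k)}\}_{k=1}^{k_b}\subset\ker H_b$ and $\{c_Z^{(j)}\}_{j=1}^{k_a}\subset\ker H_a$ satisfying $c_X^{(k)}[h_t]=\delta_{k,t}$ and $c_Z^{(j)}[i_s]=\delta_{j,s}$, where $\mathcal{F}_a=\{i_1,\ldots,i_{k_a}\}$ and $\mathcal{F}_b=\{h_1,\ldots,h_{k_b}\}$.

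Finally I would define the left-sector line operators $\bar X^{(s,k)}=(e_{i_s}\otimes c_X^{(k)}\mid 0)$, contained in row $i_s$ of the left sector, and $\bar Z^{(s,k)}=(c_Z^{(s)}\otimes e_{h_k}\mid 0)$, contained in column $h_k$ of the left sector, and analogously construct right-sector operators using information sets $\mathcal{G}_a\subset[m_a],\mathcal{G}_b\subset[m_b]$ for $\ker H_a^T,\ker H_b^T$ (the right-sector $X$ operators become columns and $Z$ operators become rows, by the row/column asymmetry of $H_x$ and $H_z$). Condition~1 of the theorem is then immediate. For condition~2, cross-sector pairs commute because their supports are disjoint, whereas a same-sector pair $\bar X^{(s,k)},\bar Z^{(j,t)}$ can overlap only at the single qubit where row $i_s$ meets column $h_t$, and the symplectic inner product evaluates to $c_X^{(k)}[h_t]\cdot c_Z^{(j)}[i_s]=\delta_{k,t}\,\delta_{s,j}$---exactly the canonical pairing claimed. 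The main obstacle, in my view, is the doubly information-set property; once this is secured, the remaining verification that the $k_ak_b+k_a^Tk_b^T$ line operators are linearly independent modulo stabilisers is a bookkeeping exercise using the tensor-product form of the representatives together with the dual choice of codeword bases.
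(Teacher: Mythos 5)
Your proposal is correct and, modulo terminology, it is essentially the construction the paper uses. What you call an \emph{information set} $\mathcal{F}_a$ for $\ker H_a$ together with its dual codeword basis $\{c_Z^{(j)}\}$ is exactly the pivot set $\pi(A)$ and the strongly lower triangular basis $\{a_i\}$ of Lemma~\ref{lemma:basis}; your unit vectors $\{e_{i_s}\}$ are the vectors $\{f_i\}_{i\in\pi(A)}$ the paper uses to span the complement $(\im H_a^T)^{\bullet}$; and your ``doubly information-set'' observation $\im H_a^T\cap\ff^{\mathcal{F}_a}=0$ is precisely what the paper proves (by the contradiction argument after Eq.~\eqref{eq:sum_image}) to show $\{f_i\}_{i\in\pi(A)}$ is a basis of $(\im H_a^T)^{\bullet}$. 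The resulting operators $(e_{i_s}\otimes c_X^{(k)}\mid 0)$ etc.\ match the paper's $\bline^x,\bline^z$ in Eqs.~\eqref{eq:bline_x}--\eqref{eq:bline_z}, and the symplectic-pairing computation is identical to Eq.~\eqref{eq:two_logicals}. The only substantive difference is that you invoke the duality $\im H_a^T=(\ker H_a)^{\perp}$ as a black box to get the doubly information-set property in one line, whereas the paper packages the same fact into the notion of a ``strongly lower triangular'' matrix and gives an explicit $O(n^3)$ Gaussian-reduction variant (Algorithm~\ref{algo:strong_triangular}) that produces both the information set and the dual basis simultaneously. Your phrasing is arguably cleaner conceptually; the paper's is geared toward an explicit algorithm. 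One minor point: you defer the linear-independence-modulo-stabilisers check as a ``bookkeeping exercise,'' but in fact it is free once the symplectic pairing is established, since a set of $k=k_ak_b+k_a^Tk_b^T$ operators with a nondegenerate symplectic form among them is automatically independent modulo the centre (the stabiliser group); neither you nor the paper needs any further tensor bookkeeping here.
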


The proof of \Cref{thm:bline} is constructive and relies on a modified version of the Gaussian elimination algorithm over $\ff$ (\Cref{algo:strong_triangular}), which yields a special kind of triangular matrices that we call strongly lower triangular matrices. 
\begin{definition}
\label{def:strong_triang}
An $m \times n$ matrix $A$ is said to be strongly lower triangular if:
\begin{enumerate}
    \item Any column $j$ has a pivot $p$ such that all the elements below the pivot are zero.
    \item All the pivots are distinct.
    \item Reordering if necessary, for any pivot $A_{p, j}=1$, the coefficients to its right are zero i.e.\ $A_{p, j+1}= A_{p, j+2}= \dots A_{p, m} = 0$.
\end{enumerate}
We indicate by $\pi(A)$ the set of row pivots of $A$:
\begin{align*}
  \pi(A) = \{p \text{ s.t. } &A_{p, j} \text{ is a pivot for some column index } j \text{ of A}\} 
\end{align*}
Given a vector space, we say that a set of vectors is strongly lower triangular if its matrix representation is.
\end{definition}
An example of a strongly lower triangular matrix is:
\begin{align}
\label{eq:a_matrix}
A = \begin{pmatrix}
        1 & 1 & 0 & 1\\
        1 & 1 & 1 & 0\\
        1 & 0 & 0 & 0\\
        0 & 1 & 1 & 1\\
        0 & 1 & 0 & 0\\
        0 & 0 & 1 & 0\\
        0 & 0 & 0 & 1
\end{pmatrix}
\end{align}
where $\pi(A) = \{3, 5, 6, 7\}$.
Observe in particular the pivot $A_{3, 1}$: all the coefficients to its right are zero even if the submatrix on its right is not zero.

As \Cref{lemma:basis} below shows, strongly lower triangular bases for the classical codes and their transposes used as seed codes in the hypergraph product naturally give a canonical basis for the associated hypergraph product code. 
\begin{lemma}
\label{lemma:basis}
Given strongly lower triangular bases for $\ker H_a$, $\ker H_a^T$, $\ker H_b$ and $\ker H_b^T$, it is possible to construct a canonical basis for the associate hypergraph product code $\hgp(H_a, H_b)$.
\end{lemma}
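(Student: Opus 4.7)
The plan is to construct the canonical basis explicitly by combining the four given strongly lower triangular bases with the two-sector geometry of the hypergraph product. Write these bases as column matrices whose columns are $\{\alpha^{(p)}\}_{p=1}^{k_a}$, $\{\alpha'^{(p)}\}_{p=1}^{k_a^T}$, $\{\beta^{(q)}\}_{q=1}^{k_b}$, $\{\beta'^{(q)}\}_{q=1}^{k_b^T}$ for $\ker H_a$, $\ker H_a^T$, $\ker H_b$, $\ker H_b^T$, with pivot row sets $\{i_p\}$, $\{j_p\}$, $\{h_q\}$, $\{\ell_q\}$ respectively.

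The central ingredient is a biorthogonality property implied by \Cref{def:strong_triang}: in a strongly lower triangular basis $\{\alpha^{(p)}\}$ with pivot set $\{i_p\}$, one has $\alpha^{(p)}_{i_q} = \delta_{p,q}$. Condition 1 of \Cref{def:strong_triang} forces zeros below each pivot in its column, condition 3 forces zeros to the right of each pivot in its row, and the reordering clause lets us arrange the pivots in monotone order so that entries to the left of a pivot in its row sit below an earlier pivot and also vanish. As a consequence, $\{e_{i_p}\}_{p=1}^{k_a}$ is linearly independent modulo $(\ker H_a)^\perp = \im H_a^T$: any relation $\sum c_p e_{i_p} \in (\ker H_a)^\perp$ forces $c_q = \langle \sum c_p e_{i_p}, \alpha^{(q)}\rangle = 0$. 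Dimension counting then makes $\{e_{i_p}\}$ a basis of a complement of $\im H_a^T$, and the analogous statement handles the three other pivot sets.

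Using these bases I would define the four line families: $X^L_{p,q} = (e_{i_p} \ox \beta^{(q)}, 0)$ and $Z^L_{p,q} = (\alpha^{(p)} \ox e_{h_q}, 0)$ on the left sector, together with $X^R_{p,q} = (0, \alpha'^{(p)} \ox e_{\ell_q})$ and $Z^R_{p,q} = (0, e_{j_p} \ox \beta'^{(q)})$ on the right sector. Direct substitution into \Cref{eq:h_x} and \Cref{eq:h_z} shows that these operators commute with the stabilizers of opposite type, since the kernel conditions $H_a \alpha^{(p)} = 0$, $H_a^T \alpha'^{(p)} = 0$, $H_b \beta^{(q)} = 0$, and $H_b^T \beta'^{(q)} = 0$ eliminate the corresponding syndromes. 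The complement property from the previous paragraph rules out stabilizer representatives: if, say, $X^L_{p,q}$ lay in $\im H_x^T$, then matching its left part column by column would force $e_{i_p} \in \im H_a^T$, contradicting biorthogonality, and the analogous argument disposes of the other three families.

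The symplectic pairing then follows from a single calculation per sector. The only possible overlap of $X^L_{p,q}$ and $Z^L_{p',q'}$ is at the grid point $(i_p, h_{q'})$, and it evaluates to $\alpha^{(p')}_{i_p} \beta^{(q)}_{h_{q'}} = \delta_{p,p'}\delta_{q,q'}$ by biorthogonality; the same computation on the right sector yields $X^R_{p,q} \cdot Z^R_{p',q'} = \delta_{p,p'}\delta_{q,q'}$; and cross-sector pairings vanish because their supports live in disjoint sectors. Since the total count $k_a k_b + k_a^T k_b^T$ matches the logical dimension from \Cref{eq:parameters}, the resulting operators form the canonical line basis required by \Cref{thm:bline}. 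The main obstacle in carrying out this plan is the biorthogonality step, because \Cref{def:strong_triang} is weaker than row-reduced echelon form and the reordering clause is essential to collapsing the pivot-restricted submatrix to the identity; once that is in place, the remaining arguments are routine bookkeeping on the rank-one tensor factorizations appearing in \Cref{eq:h_x} and \Cref{eq:h_z}.
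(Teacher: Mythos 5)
Your proposal is correct and follows essentially the same route as the paper: exploit strong lower triangularity to get the identity submatrix on pivot rows (what you call biorthogonality), use this to show the pivot unit vectors form a complement basis for the images, build the line basis as tensor products, and read the symplectic pairing off the same biorthogonality. You spell out a few steps the paper delegates to references (commutation with stabilizers and nontriviality of the logicals) and justify the pivot-row identity submatrix more explicitly via the reordering clause, but the core argument is the same.
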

\begin{proof}
We begin with some notation. Given a vector space $V \subseteq \ff^n$ its complement is any vector space $V^{\bullet} \subseteq \ff^n$ such that $V \oplus V^{\bullet} = \ff^n$. Crucially, $V^{\bullet}$ has same dimension as the orthogonal complement $V^{\bot}$ of $V$:
\begin{align*}
    V^{\bot} = \{ w \in \ff^n \text{ s.t. } \langle v, w\rangle = 0 \text{ for all } v \in V\},
\end{align*}
but they are, in general, different spaces. For example, if $V = \Span \left( (1, 1, 1)^T, (0, 1, 0)^T\right) $ then $V^{\bot} = \Span \left( (1, 0, 1)^T\right) $ and a complement of $V$ is  $V^{\bullet} = \Span \left((0, 0, 1)^T \right)$.

Before detailing the proof of \Cref{lemma:basis}, we remind the reader that the logical $Z$ operators of $\hgp(H_a, H_b)$ are spanned by:
\begin{align*}
    \left\{ (k \otimes \bar f, 0) \text{ for } k \in \ker H_a, \bar f \in (\im H_b^T)^{\bullet}\right\} \cup \left\{ (0, g \otimes \bar h) \text{ for } \bar h \in \ker H_b^T, g \in (\im H_a)^{\bullet}\right\},
\end{align*}
and the logical $X$ operators are spanned by:
\begin{align*}
    \left\{ ( \bar f' \otimes k', 0) \text{ for } k' \in \ker H_b, \bar f' \in (\im H_a^T)^{\bullet}\right\} \cup \left\{ (0, \bar h' \otimes g') \text{ for } \bar h' \in \ker H_a^T, g' \in (\im H_b)^{\bullet}\right\},
\end{align*}
where for an $m \times n$ matrix $H$, $\im H \subseteq \ff^m$ is the space spanned by the columns of $H$ and $\im H^T \subseteq \ff^n$ is the space spanned by the rows of $H$. See, for instance,~\cite{bravyi14, quintavalle2022reshape}. Thus, the problem of finding a canonical basis for $\hgp(H_a, H_b)$ can be reduced to the problem of finding suitable bases for the kernels and the complement spaces defined by the classical seed matrices. In the proof below we follow this approach and show that strongly lower triangular bases for the classical seed matrices indeed induce a canonical basis for $\hgp(H_a, H_b)$.

Let $A = \{a_{i}\} \in \ff^{n_a}$, $\bar A = \{\alpha_j\}\in \ff^{m_a}$, $B = \{b_h\} \in \ff^{n_b}$ and $\bar B = \{\beta _{\ell}\} \in \ff^{m_a}$ be strongly lower triangular bases for  $\ker H_a$, $\ker H_a^T$, $\ker H_b$ and $\ker H_b^T$ respectively. 
The indices are the pivots: $a_i$ has $i$th coordinate $a_i[i] =1$ and for any $i' > i$, its $i'$th coordinate $a_i[i'] = 0$.

We claim that if $f_i \in \ff^{n_a}$ is the $i$th unit vector, then the set $A_p =\{ f_i\}_{i \in \pi(A)} \subseteq \ff^{n_a}$ is a basis of a complement $(\im H_a^T)^{\bullet}$ of $\im H_a^T$. Since the size $n_a - \rank(H_a^T)$ of $A_p$ equals the dimension of $(\im H_a^T)^{\bullet}$ and the vectors in $A_p$ are clearly linearly independent, we only need to prove that they generate a space that has trivial intersection with $\im H_a^T$ i.e. $\Span \left(A_p \right)\cap \im H_a^T = \{0\}$. The following, is a proof by contradiction. First, we assume that there exists a vector in the set $A_p$ that belongs to $\im H_a^T$ and we reach a contradiction. In particular, this entails that $A_p \cap \im H_a^T = \emptyset$. Second, in order to obtain a contradiction, we assume that there exists a linear combination of vectors in $A_p$ that belongs to $\im H_a^T$. We reach a contradiction using the definition of the vectors $f_i$ in $A_p$ and the linearity of the inner product.

Let assume that $f_i \in \im H_a^T$ for some $f_i \in A_p$. Then, $f_i$ can be written as a linear combination of a set $\rho$ of rows of $H_a^T$:
\begin{align}
\label{eq:sum_image}
    \sum_{v^T \in \rho } v = f_i .
\end{align}
By definition of strongly lower triangular basis $A = \{a_i\}$ of $\ker H_a$, row pivots $\pi(A)$ and unit vectors $A_p = \{f_i\}$, for any $i, i' \in \pi(A)$, $\langle f_i, a_{i'} \rangle = 1$ if and only if $i'=i$ and $0$ otherwise. Combining this with \cref{eq:sum_image}, yields:
\begin{align*}
    1 &= \langle f_i, a_{i}, \rangle &\text{ }\\
    &= \left\langle\sum_{v^T \in \rho} v,\, a_i \right\rangle &\text{by hypothesis,}\\
    &= \sum_{v^T \in \rho}\langle v, a_i\rangle&\text{by linearity,}\\
    &= 0 &v \in \im H_a^T, a_i \in \ker H_a ,
\end{align*}
which is a contradiction. In particular, see \cref{eq:sum_image}, no vector $f_i$ in the set $A_p$ belongs to $\im H_a^T$: $A_p \cap \im H_a^T = \emptyset$ and therefore $A_p \subseteq (\im H_a^T)^{\bullet}$. Moreover, if we assume that some linear combination $f$ of vectors in $A_p$ belongs to $\im H_a^T$, again we could write $f$ as a sum of vectors in $\im H_a^T$ as in \cref{eq:sum_image}, and via the same argument we used for a unit vector $f_i$ thanks to the linearity of the inner product, we find a contradiction. In other words, no non-trivial linear combination of vectors in $A_p$ belongs to $\im H_a^T$: $\Span(A_p) \cap \im H_a^T = \{0\}$ and so $\Span(A_p) \subseteq (\im H_a^T)^{\bullet}$ by definition of complement space. Via a dimension-counting argument, since $\dim (A_p)= \dim(\im H_a^T)^\bullet$ and vectors in $A_p$ are linearly independent, we conclude that $A_p$ is a basis of $(\im H_a^T)^{\bullet}$.

Similarly, we can prove that $\bar A_p$, $B_p$ and $\bar B_p$ are well defined bases of the corresponding complement spaces.

Because the tensor product of bases is a basis of the tensor product space, the sets:
\begin{align}
\label{eq:bline_x}
    \bline^x &= \{(f_i \ox b_h, 0), (0, \alpha_{j} \ox f_{\ell})\},\\
    \bline^z &= \{(a_i \ox f_h, 0), (0, f_{j} \ox \beta_{\ell})\},
\label{eq:bline_z}
\end{align}
for $i \in \pi(A)$, $h \in \pi(B)$, $j \in \pi (\bar A)$, and $\ell \in \pi (\bar B)$, are sets of linearly independent logical operators. Moreover, it is straightforward to see that operators in $\bline^x$ and $\bline^z$ have support on a line of qubits. Thus, in order to verify that $\bline = \bline^x \cup \bline^z$ is a canonical basis for $\hgp(H_a, H_b)$, we only need to show that it is symplectic. 

Clearly, left and right operators do not overlap. 
If instead we take a left $Z$ operator and a left $X$ operator we find:
\begin{align}
\label{eq:two_logicals}
  \langle (a_i \ox f_h, 0), \, (f_{i'} \ox b_{h'}, 0) \rangle &= a_i[i'] \cdot b_{h'}[h], 
\end{align}  
and by strong lower triangularity:
\begin{align*}
a_i[i'] \cdot b_{h'}[h] & =\begin{cases} 1, \text{ if } i= i' \text{ and } h = h',\\
  0, \text{ otherwise }.
  \end{cases}
\end{align*}
Since an equivalent relation holds for pairs of right operators, we have shown that for every operator in $\bline^x$ there exists a unique operator in $\bline^z$ that is not orthogonal to it. Furthermore, two overlapping operators of different type overlap on exactly one physical qubit e.g.\ for the two left operators in \cref{eq:two_logicals}, it is the physical qubit at position $(i, h, L)$ when $i = i'$ and $h = h'$. 

In conclusion, $\bline = \bline^x \cup \bline^z$ is a canonical basis for $\text{HGP}(H_a, H_b)$ as per \Cref{thm:bline}.
\end{proof}
\Cref{thm:bline} is a corollary of  \Cref{lemma:basis}, provided that strongly lower triangular bases exist and can be found. We show how this is in fact the case in \Cref{app:proof}, where we present a modification of the Gaussian reduction algorithm, \Cref{algo:strong_triangular}, which finds a strongly triangular basis for any $n \times n$ binary matrix in time $O(n^3)$.

Practically, if an arbitrary code has a canonical basis as per \Cref{thm:bline}, any indexing of the physical qubits of the code naturally yields an indexing of the logical qubits. Namely, if the unique physical qubit in the overlap of a basis logical $X$ operator and a basis logical $Z$ operator has index $\iota$, then the index of the corresponding logical qubit is $\iota$ too, and we indicate it as $q_\iota$. We call the physical qubit $\iota$ the physical \emph{pivot} of the logical qubit $q_{\iota}$.

For a hypergraph product code $\hgp(H_a, H_b)$, if we display qubits on a grid as explained above and we fix a canonical basis as per \Cref{thm:bline}, we can uniquely index logical qubits as $q_{i,h}^L$ and $q_{j, \ell}^R$ corresponding to the pivots $(i, h, L)$ and $(j, \ell, R)$. 
By construction, the sector $L$ or $R$ indicates whether the logical qubit has canonical operators supported on the left or right sector qubits and the first two coordinates specify where the two canonical operators cross. 
We refer to physical qubits $(i, i, \sigma)$, $\sigma = L, R$, as \emph{diagonal} qubits and to the others as \emph{mirror} qubits. 
The logical qubits inherit the same attribute of their pivots. 
To sum up, we have found a classification of the logical qubits of $\hgp(H_a, H_b)$ that depends on the physical position of the crossing of the logical $X$ and $Z$ operators, $\bar X$ and $\bar Z$, see \cref{tab:notation_qubits}. 

\begin{table}[htb]
\centering
\begin{tabular}{C|C|C|C}
\text{symbol} & \bar Z & \bar X & \text{ pivot qubit}\\
\hline
q^{L}_{i, h} & (a_i \ox f_h, 0) & (f_i \ox b_h, 0) & (i, h, L)\\
q^{R}_{j, \ell} & (0, f_j \ox \beta_{\ell}) & (0,  \alpha_j \ox f_{\ell}) & (j, \ell, R)
\end{tabular}
\caption{Classification and indexing of the logical qubits of a hypergraph product code $\text{HGP}(H_a, H_b)$ where $\{a_i\}, \{b_h\}, \{\alpha_j\}, \{\beta_\ell\}$ are strongly lower triangular bases of $\ker H_a, \ker H_b, \ker H_a^T, \ker H_b ^ T$ indexed over the basis' pivots. We refer to logical qubits whose pivot lies on the diagonal e.g.\ $q_{i,i}^L$ or $q^R_{j, j}$ as diagonal logical qubits, and to the others as mirror logical qubits. 
\label{tab:notation_qubits}}
\end{table}

\section{Transversal Clifford gates}
\label{sec:transversal_gates}
Transversal logical operators offer the most straightforward approach to realising fault-tolerant quantum computation since they naturally limit the spread of errors. 
A unitary operator $U$ is transversal with respect to a partition\footnote{A partition of a set is a collection of non-empty and disjoint subsets of the set, whose union is the whole set.} $\{Q_i\}$ of the physical qubits of an $[\![n, k, d]\!]$ code if it can be expressed as $U=\bigotimes_i U_i$, where each $U_i$ acts non-trivially only on qubits in $Q_i$~\cite{jochym2018disjointness}. The usual notion of transversal gates~\cite{Eastin} is found choosing the singleton partition $\left\{ \{i\}\right\}$. By construction transversal gates do not spread errors between qubits in different subsets and are therefore inherently fault-tolerant, provided that all the subsets in the partition are correctable\footnote{A set of qubits is said to be correctable if it cannot contain the support of a non-trivial logical operator.}. We say that a partition $
\{Q_i\}$ is $m$-local if all its subsets have size at most $m$: $\lvert Q_i \rvert \le m$. The partition-distance $\delta_{\{Q_i\}}$ is the minimum number of subsets in the partition that supports a logical operator. Equivalently, the code can detect all errors that are supported on at most $\delta_{\{Q_i\}} - 1$ subsets. The partition-distance is a measure of how many faulty factors $U_i$ the code can tolerate without corrupting the logical information. In the same way, we can think of the distance of a code as a measure of how many faulty `identity factors' a code can deal with. 
As an example, for any $[\![n, k, d]\!]$ stabiliser code, the singleton partition $\left\{\{i\}\right\}$ is $1$-local, has partition distance $\delta_{\left\{\{i\}\right\}} = d$ and the logical Pauli operators are transversal with respect to it.

In this Section we propose a partition for square codes of partition-distance $\lfloor d/2 \rfloor$ and one for symmetric codes of partition-distance $d$; for both partitions, we report some examples of transversal operators.
Importantly, as suggested in~\cite{jochym2018disjointness, Burton}, we expect transversal operators on (2-dimensional) hypergraph product codes to be restricted to be either Pauli or Clifford operators, hence we here focus on Clifford operators.

Clifford operators permute the Pauli operators by mapping the Pauli group on $n$ qubits $\mathcal P_n$ into itself. 
The set of Clifford gates on $n$ qubits, $\mathcal{C}_n$, is a group, that can be generated by:
\begin{enumerate}[(i)]
\item The Hadamard gate H, that maps $X$ operators into $Z$ operators and vice-versa,  $X \leftrightarrow Z$.
\item The phase gate S, that maps $X$ operators into $Y$ operators and fixes $Z$ operators, $X \rightarrow Y$.
\item The CZ gate, a two qubit gate that maps $X \otimes I \rightarrow X \otimes Z$, $I \otimes X \rightarrow Z \otimes X$ and acts trivially on $Z$ operators.
\end{enumerate}

\subsection{Gates on square codes}
\label{sec:transversal_sq}
The first partition we propose builds on the unfolding technique for the color code proposed in~\cite{kubica2015unfolding} and further studied in~\cite{Moussa2016,vasmer2019three, breuckmann2022fold}.
In~\cite{kubica2015unfolding, Moussa2016}, the equivalence between color codes and folded planar codes is leveraged to construct transversal Clifford gates on the planar code. Here we build on that same idea to investigate Clifford gates on square hypergraph product codes. 

Hypergraph product codes can be seen as a generalisation of the planar code, which indeed is the hypergraph product code $\hgp(H_{\mathrm{rep}}, H_{\mathrm{rep}})$, where $H_{\mathrm{rep}}$ is the full-rank parity check matrix of the repetition code, e.g.\
\begin{align}
    H_{\mathrm{rep}} = \begin{pmatrix}
    1 & 1 & 0\\
    0 & 1 & 1
    \end{pmatrix},
\end{align}
for parameters $[3, 1, 3]$. By exploiting the symmetries of the canonical basis of hypergraph product codes  (\Cref{thm:bline}), we are able to generalise the folding of~\cite{kubica2015unfolding} to all square hypergraph product codes $\hgp(H, H)$. We can fold the left and right grid of qubits along the principal diagonal and pair the physical qubits whose sites overlap upon folding, see \cref{fig:folding}. 
Upon folding, mirror physical qubits are twinned:  $(i, h, L)$ twins with $(h, i, L)$ and $(j, \ell, R)$ with $(\ell, j, R)$. 
We call the partition given by singletons of diagonal qubits and two-qubit sets of twin qubits, diagonal-twin partition. 
More precisely, if $H \in \ff^{m \times n}$ and $\hgp(H, H)$ is a $[\![n, k, d]\!]$ code, the diagonal-twin partition of its physical qubits is given by
\begin{align}
\label{partition_twin_diagonal}
&\Big\{\{(i, i, L)\}, \{(j, j, R)\}\Big\} \notag\\
&\hspace{1.7cm}\cup\\
&\Big\{ \{(i, h, L),\, (h, i, L)\}, \{(j, \ell, R),\, (\ell, j, R)\} \Big\} \notag
\end{align}
where $1\le i, h \le n$ and $i \neq h$; $1 \le j, \ell \le m$ and $j \neq \ell$.

The diagonal-twin partition is $2$-local and has partition-distance $\delta_{\mathrm{dt}} = \lfloor d/2 \rfloor$. In fact, as proven in~\cite{quintavalle2022reshape}, a non-trivial logical operator for an $[\![n, k, d]\!]$ hypergraph product code has support on at least $d$ rows or columns of qubits in the same sector. Because the diagonal-twin partition is $2$-local, the union of any choice of $\mu$ subsets from it has size at most $2\mu$. Thus, in order to fill at least $d$ rows or $d$ columns of physical qubits, we need to pick at least $\mu\ge  d/2$ subsets in the diagonal-twin partition.

The nomenclature of physical qubits as diagonal and twins is naturally inherited by the logical qubits via the correspondence of logical qubits and their physical pivots, see \cref{tab:notation_qubits}. 
This labelling is key in understanding the logical actions of transversal operations for the diagonal-twin partition and therefore we summarize it here: diagonal logical qubits are indexed as $q_{i,i}^L$ and $q_{j, j}^R$; twin logical qubits as $(q_{i, h}^L, q_{h, i}^L)$ and $(q_{j, \ell}^R, q_{\ell, j}^R)$ for $i \neq h$ and $j \neq \ell$, see \cref{fig:logical_qubits_enumeration}.
\begin{figure}
\centering\includegraphics[scale=0.45]{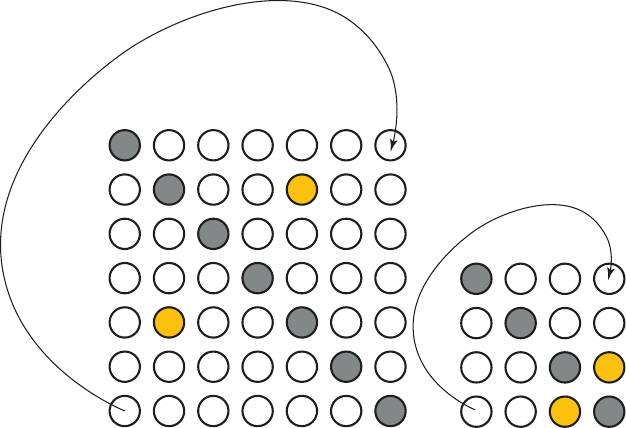}
\caption{Graphical representation of the diagonal-twin partition for squares hypergraph product codes derived from the color code unfolding into the planar code~\cite{kubica2015unfolding, Moussa2016}. The two grids of physical qubits are folded separately along the principal diagonal and qubits that share the same sites upon folding are paired together. The grey circles represent physical diagonal qubits while all the other circles represent mirror qubits. The yellow circles (two on the left, and two on the right) are twin qubits.}
\label{fig:folding}
\end{figure}

Similarly to the planar code case, both the \gates{Hadamard-SWAP} and the $\gates{CZ-S}$ gates detailed below are valid logical operators on $\hgp(H, H)$. Whilst it is immediate to verify that the stabilisers group is preserved by these two operators, less immediate is the identification of the logical operation performed. 
Crucially, this task becomes trivial when we look at the action induced on a canonical basis derived by strongly lower triangular reduction as per \Cref{lemma:basis}. 

On the physical level, \gates{Hadamard-SWAP} consist of (i) Hadamard on every physical qubit; (ii) physical SWAP between twin qubits. \gates{Hadamard-SWAP} is a valid logical operator as it preserves the stabiliser group mapping $S_{x}(j,h) \leftrightarrow S_{z}(h,j)$ and (i) swaps the logical $X$ and the logical $Z$ operators of twin qubits: on the left $q_{i, h}^L$ and $q_{h, i}^L$, on the right $q_{j, \ell}^R$ and $q_{\ell, j}^R$; (ii) acts as logical Hadamard on the diagonal qubits.

The operator \gates{CZ-S} is defined on the physical level as: (i) S gate on left diagonal qubits; (ii) $\text{S}^{\dag}$ on right diagonal qubits; (iii) CZ between twin qubits. \gates{CZ-S} preserves the $Z$ stabilisers and maps the $X$ stabiliser $S_x(j, h)$ into $S_x(j, h) S_z(h,j)$. Importantly, the phase factor in the product $S_x(j, h) S_z(h,j)$ is correctly preserved since, by construction, an $X$-stabiliser $S_x(j, h)$ has a diagonal left qubit $(h,h)$ in its support if and only if $H_a[j, h] = 1$, if an only if it has a diagonal right qubit $(j, j)$ in its support too. As such, if an S gate is applied, an $\text{S}^{\dag}$ gate is applied too and the global phase cancels out. Again by looking at the action of the operator $\gates{CZ-S}$ on a canonical basis, since twin logical operators have support on mirror qubits only and each diagonal logical operator has support on exactly one diagonal qubit, we find that the logical action of the operator \gates{CZ-S} is (i) S on left diagonal qubits; (ii) $\text{S}^{\dag}$ on right diagonal qubits; (iii) CZ between twin qubits.

\subsubsection{An example of a square code}
\label{sec:example_h_tilde}
As a guiding example, we consider the square code $\hgp(\tilde H, \tilde H)$, where $\tilde H$ a is non-full-rank parity check matrix of the classical $[7, 4, 3]$ Hamming code:
\begin{align}
\tilde H = \begin{pmatrix}
1& 1& 0& 1& 1& 0& 0\\
1& 0& 1& 1& 0& 1& 0\\
0& 1& 1& 1& 0& 0& 1\\
1& 0& 1& 0& 1& 0& 1
\end{pmatrix}.
\label{eq:tilde_h}
\end{align}
The matrix $\tilde H$ defines an equivalent $[7, 4, 3]$ classical code and $\tilde H^T$ defines a $[4, 1, 3]$ code. By \cref{eq:parameters}, $\hgp(\tilde H, \tilde H)$ is a $[\![65, 17, 3]\!]$ CSS code. 
The columns of $A$ in \cref{eq:a_matrix} are a strongly triangular basis of $\ker \tilde H $ and $\tilde v = \begin{pmatrix}
1 & 0 & 1 & 1 
\end{pmatrix}^T$ generates $\ker \tilde H^T$. 
Trivially, $\{\tilde v\}$ is a strongly lower triangular basis of $\ker \tilde H ^T$, with pivot $\pi(\tilde{v}) = 4$. 
In \cref{fig:logical_qubits_enumeration} qubits are represented as circles. 
We note how these are divided into a $7 \times 7$ grid of left physical qubits and a $4 \times 4$ grid of right physical qubits. 
The characteristic square shape of these two grids makes $\hgp(\tilde H, \tilde H)$ a square code. 
The physical diagonal qubits are the ones that lie across the principal diagonals of the squares (the two black lines in \cref{fig:logical_qubits_enumeration}). 
The black circles correspond to physical pivots, one for each logical qubit of the code. 
As for any other square hypergraph product code derived from a classical $[n, k, d]$ code whose transpose is a $[m, k^T, d^T]$ code, there are $k^2 = 16$ left logical qubits, $k = 4$ of which diagonal, and $(k^T)^2 = 1$ right logical qubits, $k^T = 1$ of which diagonal.
\begin{figure}
    \centering
    \includegraphics[scale=0.45]{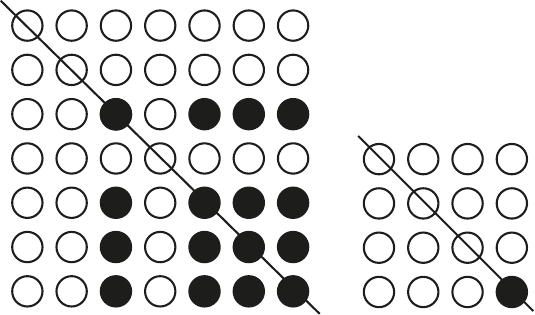}
    \caption{Graphical representation of the physical qubits of the square code $\hgp(\tilde H, \tilde H)$, where $\tilde H$ is as in \cref{eq:tilde_h}. The physical qubits (white circles) are arranged on a left and right grid. The black circles highlight the position of the physical pivots associated to the logical qubits, see \cref{tab:notation_qubits}. The canonical basis pictured is derived as explained in \Cref{lemma:basis} from the matrix $A$ of \cref{eq:a_matrix}, whose columns are a strongly lower triangular basis of $\ker \tilde H$.}
    \label{fig:logical_qubits_enumeration}
\end{figure}

\subsection{Gates on symmetric codes}
\label{sec:symmetric_transversal}
The second partition we propose improves on the diagonal-twin partition, having partition-distance $d$ against $\lfloor d/2 \rfloor$ for a $[\![n, k, d]\!]$ code. The price paid is the validity of this partition only on a small class of square codes, the symmetric codes: $\syhgp(H) = \hgp(H, H)$ for some $H$ such that $\im H = \im H^T$.

Notably, the toric code is a symmetric hypergraph product code $\syhgp(H_{\mathrm{toric}})$ where e.g.\ for $d = 3$,
\begin{align*}
    H_{\mathrm{toric}} = 
    \begin{pmatrix}
    1 & 1 & 0 \\
    0 & 1 & 1 \\
    1 & 0 & 1
    \end{pmatrix}.
\end{align*} 

For our purposes, the key feature of symmetric hypergraph product codes is that their physical qubits can be arranged in two square grids of the same size. This fact suggests that we can pair physical qubits by superimposing the left grid of physical qubits on the right one, and pairing qubits that sit at the same coordinate. In this way, every physical qubit $(i, h, L)$ is paired with its \emph{sibling} qubit $(i, h, R)$. 
Explicitly, given a symmetric code $\syhgp(H)$, with $H \in \ff^{n \times n}$, we define its sibling partition as:
\begin{align}
\label{sibling_partitioning}
\Big\{ \{(i, h, L),\, (i, h, R)\}\Big\}.
\end{align} 
where $1\le i, h, \le n$. See \cref{fig:symm_hamming} for a graphical representation of the sibling partition.
The sibling partition has partition-distance $\delta_s = d$. In fact, as said above and proven in~\cite{quintavalle2022reshape}, every non-trivial logical operator of a hypergraph product has support on at least $d$ rows or columns in the same sector. As such, even if the sibling partition is $2$-local, every subsets in it can give a contribution of at most $1$ towards the covering of an arbitrary logical operator. This observation is more general: any partition whose subsets $Q_i$ contain at most one qubit in each sector has maximum partition-distance $d$. We call any such partition sector-transversal.

The \gates{Hadamard-SWAP} operation defined above for the square codes on the diagonal-twin partition is similarly defined on the sibling partition too: physical Hadamard on all qubits and SWAP between siblings qubits yields Hadamard on all the logical qubits composed with logical SWAPs between sibling logical qubits.

Via the sibling partition is naturally defined a transversal \gates{CZ} operator. In fact, applying physical CZ gates on all pairs of sibling qubits preserves the $Z$ stabilisers and maps the $X$ stabilisers $S_{x}(j,h)$ into $S_{x}(j,h)S_{z}(h,j)$. On the logical level, the $\gates{CZ}$ operator yields a CZ between pairs of sibling logical qubits $q_{i, h}^L$ and $q_{i, h}^R$.
\subsubsection{An example of a symmetric code}
\label{sec:example_syhgp}
Building on our guide example in \Cref{sec:example_h_tilde}, we illustrate a symmetric hypergraph product code derived from the $[7, 4, 3]$ Hamming code with full-rank parity check matrix $H$,
\begin{align}\label{EqH_ind}
H=
\begin{pmatrix}
1 & 1 & 1 & 0 & 1 & 0 & 0\\
1 & 0 & 1 & 1 & 0 & 1 & 0\\
0 & 1 & 1 & 1 & 0 & 0 & 1
\end{pmatrix}.
\end{align}
We define the symmetric Hamming code as the symmetric hypergraph product code $\syhgp(H^TH)$, where\footnote{Note that, for any matrix $A$, $A^TA$ is symmetric.}
\begin{align*}
H^TH=\begin{pmatrix}
0 & 1 & 0 & 1 & 1 & 1 & 0\\
1 & 0 & 0 & 1 & 1 & 0 & 1\\
0 & 0 & 1 & 0 & 1 & 1 & 1\\
1 & 1 & 0 & 0 & 0 & 1 & 1\\
1 & 1 & 1 & 0 & 1 & 0 & 0\\
1 & 0 & 1 & 1 & 0 & 1 & 0\\
0 & 1 & 1 & 1 & 0 & 0 & 1
\end{pmatrix}.
\end{align*}
The symmetric Hamming code $\syhgp(H^TH)$ has parameters $[\![98,32,3]\!]$ and stabiliser generators of weight 8. 
It has a rate of $k/n=32/98\approx 0.33$ which substantially outperforms\footnote{For $d=3$, the surface and the toric code have rate $1/13\approx 0.08$ and $1/18 \approx 0.06$ respectively. 
The optimal rate for codes with $k=1$ is achieved by the five qubit code~\cite{Laflamme}.} the surface and toric code rates and indeed the optimal rate of $k/n=0.2$ for codes with $k=1$ and $d\geq 3$. For this reason, we believe that similarly constructed symmetric codes could be promising for low-overhead, near-term quantum computing. 
In \Cref{app:matrices} we present several more examples of small ($n<1000$) symmetric hypergraph product codes with stabiliser generators of weight $w\leq 16$ on which all the gates we describe could be implemented.

\begin{figure}
\centering
\begin{subfigure}[b]{0.48\columnwidth}
\centering
\includegraphics[scale=0.45]{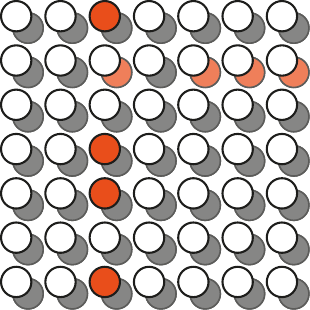}
\caption{
\label{fig:x_stab_hamming}
}
\end{subfigure}
\hfill
\begin{subfigure}[b]{0.48\columnwidth}
\centering
\includegraphics[scale=0.45]{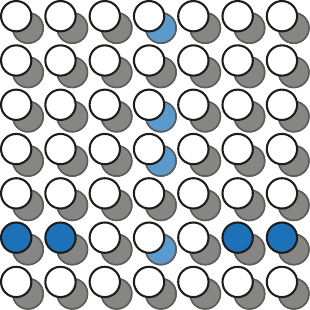}
\caption{
\label{fig:z_stab_hamming}
}
\end{subfigure}
\caption{Graphical representation of the symmetric hamming code $\syhgp(H^TH)$. In both figures, the left sector qubits sit at the front while the right sector qubits sit at the back. Sibling pair of qubits are superimposed. In \cref{fig:x_stab_hamming} the orange filled circles represent the support of the X stabiliser $S_x(2, 3)$.  In \cref{fig:z_stab_hamming} the blue filled circles represent the support of the Z stabiliser $S_z(4, 6)$. 
\label{fig:symm_hamming}
}
\end{figure}

\section{Completing a universal gate set \label{sec:universal}}

Transversal gates enable us to implement a subset of logical Clifford gates on hypergraph product codes, but this is not sufficient for performing universal fault-tolerant quantum computation. 
In particular, the Clifford gates detailed in \Cref{sec:transversal_gates} act equivalently on logical qubits of the same type (diagonal, twin or sibling) and so cannot be used to e.g.\ entangle arbitrary pairs of logical qubits. 
In this Section, we focus on symmetric hypergraph product codes and show how to implement a universal gate set using circuits that consist of sequences of sector-transversal gates\footnote{Transversal gates over a sector-transversal partition.}.
Such circuits are pieceably fault-tolerant, meaning that each gate in the sequence is fault-tolerant, and therefore the overall circuit is fault-tolerant when supplemented with intermediate error correction~\cite{Yoder}.
In \Cref{sec:pft}, we show how to implement entangling gates between arbitrary logical qubits by combining pieceably fault-tolerant gates. 
Then, in \Cref{sec:inj}, we show how to complete a universal gate via state injection. 
Lastly, in \Cref{sec:limit}, we discuss the limitations of our approach. 

\subsection{Pieceably fault-tolerant two-qubit entangling gates}
\label{sec:pft}
We begin by constructing a pieceably fault-tolerant circuit for implementing a logical CZ gate between logical qubits in different sectors, via the round-robin method presented in~\cite{Yoder}. 

By Claim 2 of~\cite{Chao}, we can implement a logical CZ between two logical qubits by performing a CZ between each pair of physical qubits in the support of the logical Z operators of the qubits considered. 
For a symmetric code (using the notation of \cref{tab:notation_qubits}) we can therefore implement a logical CZ between arbitrary qubits in different sectors, $q_{i, h}^L$ and $q_{j,\ell}^R$, by performing a CZ between each pair of physical qubits in the support of $\bar{Z}_{i,h}^L = (a_i \ox f_h, 0)$ and $\bar{Z}_{j, \ell}^R= (0, f_j \ox \beta_{\ell})$. 
For $v \in \ff^n$ we indicate by $\supp(v)$ the ordered set of qubits/indices in the support of the vector $v$ i.e. $\supp(v) = \{i \text{ s.t. } 1 \le i \le n \text{ and } v[i] = 1\}$. 
Claim 2 of~\cite{Chao} states that
\begin{align}\label{eq:chao}
&\prod_{(\eta, \gamma) \in \supp(a_i) \times \supp(\beta_{\ell})} \text{CZ } (\eta, h,L),(j,\gamma, R)
\end{align}
implements the logical operator
\begin{align*}
\text{CZ } q_{i,h}^L, q_{j,\ell}^R.
\end{align*}

In order to use the round-robin method to make the operation described by (\ref{eq:chao}) fault-tolerant, we want to group the physical CZ operations in separate fault-tolerant time steps and perform intermediate error correction between them. 
For symmetric codes, we can achieve this if we perform physical CZ's in tranches so that each tranche only contains CZ's between left and right sector qubits---as opposed to CZ's between two qubits from the same sector. 
To this end, we let $\Delta = \max(|\supp(a_i)|, |\supp(\beta_{\ell}|)$ and, for $\eta \in \supp(a_i)$, we denote by $\eta_{\#}$ its index in $\supp(a_i)$, meaning $\eta_{\#} = \nu$ if $\eta$ is the $\nu$th element in the ordered set $\supp(a_i)$; with a slight abuse of notation, we can write $\eta \oplus_{\Delta} t$ to indicate the $\mu$th element in $\supp(\beta_{\ell})$, where $\mu = \eta_{\#} + t \mod \Delta$, for any integer $t$.  Combining this notation with \cref{eq:chao}, yields
\begin{align}
\label{eq:pft_cz}
&\prod_{t=0}^{\Delta-1}\left(\prod_{\substack{\eta\in\supp(a_i) \\ \eta \oplus_\Delta t \leq |\supp(\beta_{\ell})|}} \text{CZ}(\eta, h,L),(j, \eta \oplus_{\Delta} t, R) \right). 
\end{align}
Treating each value of $t$ in \cref{eq:pft_cz} as a time step, the inner summations,
\begin{align}
\label{eq:internal}
\Omega_t = \prod_{\substack{\eta\in\supp(a_i) \\ \eta \oplus_\Delta t \leq |\supp(\beta_{\ell})|}} \text{CZ}(\eta, h,L),(j, \eta \oplus_\Delta t, R),
\end{align}
are sequences of non-overlapping sector-transversal operators, as desired. 
For each time step $0 \le t \le \Delta$, first we apply the gates of \cref{eq:internal} and then perform one round of error correction using the ReShape decoder~\cite{quintavalle2022reshape}. The use of ReShape at this stage is fundamental because it corrects errors on different sectors independently. Hence, using ReShape to correct errors between time steps, the protocol preserve the partition-distance $d$ of the operator $\Omega_t$ in  \cref{eq:internal} and can correct up to $\lfloor d/2 \rfloor$ faulty CZ gates.

In conclusion, CZ between arbitrary left and right logical qubits can be implemented in $\sim d_z^{\uparrow}$ time steps, where $d_z^{\uparrow}$ is the maximum weight of a canonical Z operator. An example on the symmetric Hamming code is depicted in \cref{fig:hamming_pft}.

\begin{figure}[]
    \centering
    \begin{subfigure}[b]{0.19\linewidth}
        \centering
        \includegraphics[scale=0.45]{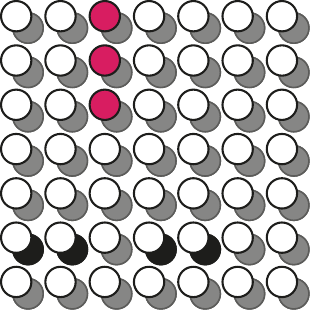}
        \caption{Logical $Z$
        \label{fig:hamming_logical}}
    \end{subfigure}
    \begin{subfigure}[b]{0.19\linewidth}
        \centering
        \includegraphics[scale=0.45]{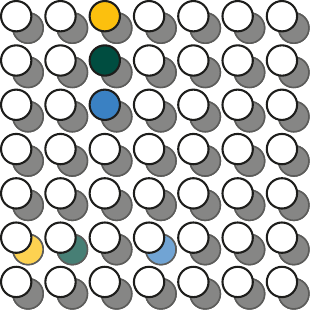}
        \caption{$t=0$.
        \label{fig:hamming_t0}
    }
    \end{subfigure}
    \begin{subfigure}[b]{0.19\linewidth}
        \centering
        \includegraphics[scale=0.45]{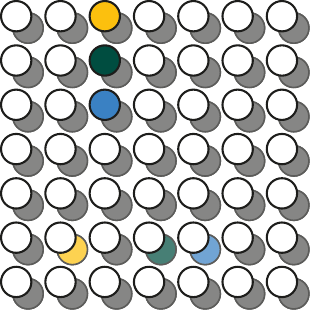}
        \caption{$t=1$.
        \label{fig:hamming_t1}}
    \end{subfigure}
    \begin{subfigure}[b]{0.19\linewidth}
        \centering
        \includegraphics[scale=0.45]{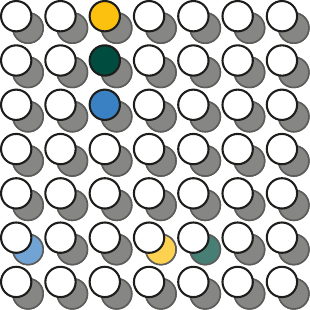}
        \caption{$t=2$.
        \label{fig:hamming_t2}}
    \end{subfigure}
    \begin{subfigure}[b]{0.19\linewidth}
        \centering
        \includegraphics[scale=0.45]{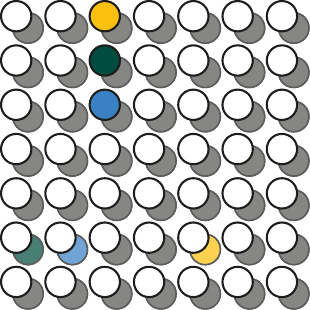}
        \caption{$t=3$.
        \label{fig:hamming_t3}}
    \end{subfigure}
    \caption{
        Round robin implementation of $\text{CZ } q_{3,3}^L q_{6,5}^R$ by the circuit of \cref{eq:pft_cz} on the symmetric Hamming code.
        (a) shows the supports of the logical Z operators. 
        (b)-(e) illustrate the physical CZ gates acting at each time step $t$, where qubits highlighted with the same colour have a CZ gate applied to them.
    }
    \label{fig:hamming_pft}
\end{figure}


We can also construct a pieceably fault-tolerant circuit for the gate 
\begin{equation}
\text{XCX} :=\text{H}^{\otimes 2}\cdot\text{CZ}\cdot \text{H}^{\otimes 2}
\end{equation}
Indeed, combining again Claim 2 of~\cite{Chao} and \cref{eq:pft_cz}, we find that
\begin{align}
\label{eq:pft2}
&\prod_{t=0}^{\Delta-1}\left(\prod_{\substack{\eta\in\supp(b_h)\\  \eta \oplus_\Delta t \leq |\supp(\alpha_j)|}} \text{XCX}(i, \eta, L),(\eta\oplus_{\Delta} t, \ell, R)\right) 
\end{align}
implements the logical operator
\begin{align*}
\text{XCX } q_{i,h}^L,q_{j,\ell}^R.
\end{align*}
where $\bar{X}_{i, h}^L = (f_i \otimes b_h, 0)$ and $\bar{X}_{j, \ell}^R =(0, \alpha_j \otimes f_{\ell})$ and $\Delta = \max(|\supp(b_h)|, |\supp(\alpha_j)|)$.
As in the CZ case, we can use the round-robin method in~\cite{Yoder} and perform error correction between each sector-transversal time step to obtain a fault-tolerant implementation of XCX.

Given our ability to perform CZ and XCX gates between arbitrary pairs of qubits in different sectors, we can use the circuit identity shown in \cref{fig:cnot_h} to implement CNOT gates between arbitrary pairs of qubits in the same sector.
To summarize, we can implement entangling gates between arbitrary logical qubits using circuits composed of at most four pieceably fault-tolerant circuits.
\begin{figure}[h]
\centering
\includegraphics[]{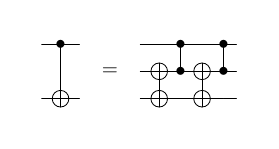}
\caption{
    Circuit identity showing how to construct CNOT from CZ and XCX gates, where XCX gates are depicted with targets on both qubits.
}
\label{fig:cnot_h}
\end{figure}

\subsection{Single-qubit gates via state injection}
\label{sec:inj}
To complete a universal gate set, it is sufficient to implement the H, S and T gates. The gates S and T can be implemented via state injection~\cite{gottesman1999,zhou2000,Bravyi}. We can use a CSS code as a state factory and then use a pieceably fault-tolerant CZ gate to implement the state injection circuits shown in \cref{fig:state_injection_gadget}. State injections can be performed from an auxiliary $[\![n', 1, d']\!]$ CSS code $\mathcal{C}$  that acts as a state factory by leveraging pieceably fault-tolerant gates to implement the circuits in \cref{fig:state_injection_gadget} at a logical level. 
More precisely, let $\zeta \in \ff^{n'}$ describe the support of the logical $Z$ operator for the logical qubit $q_{\mathcal C}$ of $\mathcal C$ and suppose that we number the physical qubits of $\mathcal C$ as $(\iota)_{\mathcal C}$.
Then, via Claim 2 of~\cite{Chao} and \cref{eq:pft_cz}, we obtain that
\begin{align}
\label{eq:cz_aux}
    &\prod_{t=0}^{\Delta-1}\left(\prod_{\substack{\eta\in\supp(a_i)\\ \eta \oplus_\Delta t \leq |\zeta|}} \text{CZ }(\eta, h,L), (\eta \oplus_{\Delta} t)_{\mathcal{C}}\right)
\end{align}
where $\Delta=\max\left(|\supp(a_i)|,|\supp(\zeta)|\right)$,
implements
\begin{align*}
    \text{CZ } q_{i,h}^L, q_{\mathcal{C}}.
\end{align*}
And similarly for a right qubit $q_{j, \ell}^R$.
The operator of \cref{eq:cz_aux} consists of CZ operators between different codes so is transversal at each time step and is therefore pieceably fault-tolerant.
As $\mathcal C$ is a CSS code, we can measure the logical $X$ operator destructively by measuring all of the physical qubits in the $X$ basis and performing classical error correction on the measurement outcomes. 
Therefore, to implement S, and T we only need to fault-tolerantly prepare the states $HS\ket{+}$ and $HT\ket{+}$.
For a $d=3$ symmetric hypergraph product code, one option would be to produce these states using the $[\![15,1,3]\!]$ Reed-Muller code~\cite{knill1996threshold,steane1999,anderson2014}.
For higher distance symmetric hypergraph product codes, preparing magic states could be accomplished using standard magic state distillation techniques~\cite{Bravyi,Bravyi2,Litinski,Chamberland}.
In addition, multiple magic states could also be injected from a CSS code with multiple encoded qubits, using parallel CZ gates (see \cref{sec:limit}).
\begin{figure}[t]
\centering
\includegraphics[]{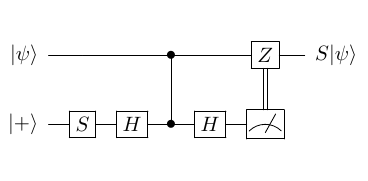} \hfill \includegraphics[]{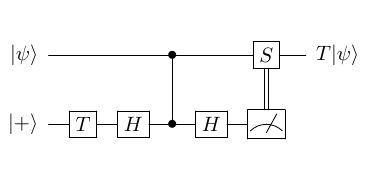} 
\caption{
    State injection gadgets for implementing S and T gates.
    The top wire in each circuit represents a logical qubit of a a symmetric hypergraph product code and the bottom wire represents a logical qubit of an ancillary CSS code. 
    To implement the CZ gates we use the pieceably fault-tolerant circuit in \cref{eq:pft_cz}.
    Note that for CSS codes the X basis measurements, represented as the gate H followed by Z basis measurements, can be done transversally.
}
\label{fig:state_injection_gadget}
\end{figure}

For the implementation of the logical H gate on single qubits we propose the use of the \texttt{Hadamard-Swap} operation detailed in \Cref{sec:symmetric_transversal} together with three logical S gates. Without loss of generality we here illustrate how to implement the gate H on the left logical qubit $q_{i, h}^L$ with sibling qubit $q_{i, h}^R$. The construction for right logical qubits follows easily by switching the roles of siblings. We indicate by $U(q)$ the logical gate $U$ on the logical qubit $q$, i.e.\ $\text{S}(q_{i, h}^L)$ indicates the logical S gate on the logical qubit $q_{i, h}^L$. The composition of gates:
\begin{align}
\label{eq:hadamard_single}
    \text{S}(q_{i,h}^L) \left( \texttt{ Hadamard-SWAP } \text{S}(q_{i, h}^R) \texttt{ Hadamard-SWAP }\right) \text{S}(q_{i, h}^L)
\end{align}
equates:
\begin{align*}
   \text{H S H S H }\left(q_{i,h}^L \right).
\end{align*}
Since
\begin{align}
    \text{H } (\text{S H S}) \text{ H} = e^{i \pi/4} \text{ H} 
\end{align}
the composition of gates in \cref{eq:hadamard_single} implements the logical H gate on qubit $q_{i, h}^L$, up to a global phase factor. Hence, the logical H gate on arbitrary qubits can be implemented at the cost of three S gates and two sector-transversal operations. 
\subsection{Limitations of pieceable fault-tolerance}
\label{sec:limit}

We conclude this Section by highlighting two important limitations of pieceable fault tolerant techniques: intermediate correction and time cost.
To guide our analysis, let us consider the pieceably fault-tolerant CZ gate of \Cref{sec:pft}. 

Since CZ is diagonal in the $Z$ basis, $Z$ stabilisers are left unchanged during the protocol and hence correction for $X$ errors can be done, at each time step, via standard measurement of $Z$ stabilisers. 
However, the original $X$ stabilizer generators are transformed at each intermediate time step, with no guarantee on their weight---which could scale with the code distance.
To avoid measuring high-weight generators, we can neglect to do error correction for $Z$ errors at intermediate time steps, at the cost of allowing $Z$ errors to build up for a constant number of rounds (for codes of a fixed distance).
This strategy will only be fault-tolerant for codes of a fixed distance and therefore will not give a threshold without modification\footnote{Another possible solution is to perform weight reduction~\cite{Cohen,Hastings,Hastings2} on the $X$ stabiliser generators at each time step, likely at the cost of a high space overhead.}. 
Nevertheless, the asymptotic behaviour encapsulated by a threshold is less important in the short- to medium-term regime where space overhead will likely be the most important constraint.

As regards to time cost, pieceably fault-tolerant circuits necessarily have a time overhead higher than transversal gates because of intermediate error-correction. 
If the time cost of one cycle of error-correction followed by the application of a transversal gate is $\tau$, then each sector-transversal gate has time cost $\tau$. By contrast, the pieceably fault-tolerant CZ of \cref{eq:pft_cz}, has cost at least $\tau d$ for a distance $d$ code. For instance, in the symmetric Hamming code example given in \cref{sec:example_syhgp}, CZ or XCX have cost $\tau 4$ because $4$ is the maximum weight of a canonical logical operator.

In general, composing $m$ pieceable fault tolerant gates takes time $\tau  d^{\uparrow} m$, where $d^{\uparrow}$ is the maximum weight of a canonical logical operator. 
However, this overhead can be reduced via parallelization. For instance,  
\begin{align*}
    \text{CZ } q_{i, h}^L, \, q_{j, \ell}^R
    \quad \text{and} \quad
    \text{CZ } q_{i', h'}^L, \, q_{j', \ell'}^R
\end{align*}
can be performed in parallel via \cref{eq:pft_cz}, provided that the logical Z operators of the qubits considered have all disjoint support e.g.\ whenever $h \neq h'$ and $j \neq j'$; see \cref{fig:parallel} for an example.
Therefore, if we want to implement $m$ pieceable fault-tolerant gates, we can divide them into subsets of parallelizable gates and, if $n_p$ is the minimum number of gates in any of these subsets, we can implement all the gates in time $\tau d^\uparrow m/n_p$, 
reducing the average time overhead per gate. 
Importantly, performing gates in parallel has no additional space overhead, in contrast to the measurement-based scheme of~\cite{Cohen}.

\begin{figure}
    \centering
    \includegraphics[scale=0.45]{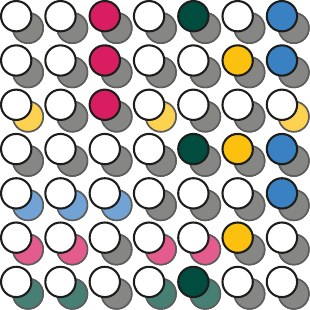}
    \caption{
        Graphical representation of the parallel implementation of pieceably fault tolerant gates. 
        Columns of left (front) qubits and rows of right qubits (back) filled in the same color represent pairs of logical operators whose supports do not overlap. 
        In the image, magenta: $q_{3, 3}^L, q_{6, 5}^R$, green $q_{7, 5}^L, q_{7, 5}^R$, yellow: $q_{6, 6}^L, q^R_{3, 7}$, blue: $q_{5, 3}^L q_{5, 7}^R$. 
        Pieceably fault tolerant gates that act on the support of these logical operators (e.g.\ CZ) can be implemented in parallel.  
        We note that, for any symmetric hypergraph product code of dimension $k^2$, we can always find $k$ pairs of logical qubits whose logical operators do not overlap as shown here (one for each left line pivot and one for each right line pivot).
    }
\label{fig:parallel}
\end{figure}

\section{Conclusion}
\label{sec:discussion}

We have investigated the structure of hypergraph product codes to characterize transveral Clifford gates on them. In fact, we proved that every hypergraph product code has a canonical basis for its logical space whose features suggest qualitative differences between logical qubits (diagonal, twin and sibling). We showcased how to leverage such differences to implement some transversal Clifford gates on square and symmetric hypergraph product codes. The logical transversal gates we have found are limited but we comment on how these could be augmented via pieceable fault tolerance and state injection to achieve universality.

While we have focused on symmetric hypergraph product codes, we believe that the choice of structured seed classical codes in the hypergraph product could be exploited to construct other transversal gates. We also conjecture that a partitioning strategy similar to the ones here presented could be used to implement transversal non-Clifford gates on higher dimensional homological codes~\cite{vasmer2019three, jochym2021four}. More generally, it would be interesting to extend the construction of a canonical basis to the more efficient product codes construction such as the ones in \cite{Panteleev2, Hastings3, Breuckmann2, leverrier2022quantum}.

 \clearpage
\acknowledgements
This work is supported by the Australian Research
Council via the Centre of Excellence in Engineered Quantum Systems (EQUS) project number CE170100009.
Research at Perimeter Institute is supported in part by the Government of Canada through the Department of Innovation, Science and Economic Development Canada and by the Province of Ontario through the Ministry of Colleges and Universities. We acknowledge support by the Open Access Publication Fund of Freie Universität Berlin.\newline\noindent
The authors would like to thank Stephen Bartlett, Earl Campbell and Lawrence Cohen for helpful discussions.
\bibliographystyle{quantum.bst}
\bibliography{bibliography_doi.bib}

\begin{thebibliography}{10}

\bibitem{Arute}
Frank Arute, Kunal Arya, Ryan Babbush, Dave Bacon, Joseph~C Bardin, Rami
  Barends, Rupak Biswas, Sergio Boixo, Fernando~GSL Brandao, David~A Buell,
  et~al.
\newblock ``Quantum supremacy using a programmable superconducting processor''.
\newblock \href{https://dx.doi.org/10.1038/s41586-019-1666-5}{Nature {\bf 574},
  505--510}~(2019).

\bibitem{shor1997}
Peter~W. Shor.
\newblock ``Fault-tolerant quantum computation''.
\newblock In Proceedings of 37th Conference on Foundations of Computer Science.
\newblock \href{https://dx.doi.org/10.1109/SFCS.1996.548464}{Pages 56--65}.
\newblock ~(1996).

\bibitem{gidney2021factor}
Craig Gidney and Martin Eker{\aa}.
\newblock ``How to factor 2048 bit {RSA} integers in 8 hours using 20 million
  noisy qubits''.
\newblock \href{https://dx.doi.org/10.22331/q-2021-04-15-433}{Quantum {\bf 5},
  433}~(2021).

\bibitem{Kim}
Isaac~H Kim, Eunseok Lee, Ye-Hua Liu, Sam Pallister, William Pol, and Sam
  Roberts.
\newblock ``Fault-tolerant resource estimate for quantum chemical simulations:
  Case study on {Li-ion} battery electrolyte molecules''.
\newblock \href{https://dx.doi.org/10.1103/PhysRevResearch.4.023019}{Physical
  Review Research {\bf 4}, 023019}~(2022).

\bibitem{Breuckmann}
Nikolas~P Breuckmann and Jens~Niklas Eberhardt.
\newblock ``Quantum low-density parity-check codes''.
\newblock \href{https://dx.doi.org/10.1103/PRXQuantum.2.040101}{PRX Quantum
  {\bf 2}, 040101}~(2021).

\bibitem{Kitaev}
A~Yu Kitaev.
\newblock ``Fault-tolerant quantum computation by anyons''.
\newblock \href{https://dx.doi.org/10.1016/S0003-4916(02)00018-0}{Annals of
  Physics {\bf 303}, 2--30}~(2003).

\bibitem{panteleev2022asymptotically}
Pavel Panteleev and Gleb Kalachev.
\newblock ``Asymptotically good quantum and locally testable classical {LDPC}
  codes''.
\newblock In Proceedings of the 54th Annual ACM SIGACT Symposium on Theory of
  Computing.
\newblock \href{https://dx.doi.org/10.1145/3519935.3520017}{Pages 375--388}.
\newblock ~(2022).

\bibitem{Gottesman}
Daniel Gottesman.
\newblock ``Fault-tolerant quantum computation with constant overhead''.
\newblock \href{https://dx.doi.org/10.26421/QIC14.15-16-5}{Quantum Information
  \& Computation {\bf 14}, 1338–1372}~(2014).

\bibitem{Krishna}
Anirudh Krishna and David Poulin.
\newblock ``Fault-tolerant gates on hypergraph product codes''.
\newblock \href{https://dx.doi.org/10.1103/PhysRevX.11.011023}{Physical Review
  X {\bf 11}, 011023}~(2021).

\bibitem{krishna2020topological}
Anirudh Krishna and David Poulin.
\newblock ``Topological wormholes: Nonlocal defects on the toric code''.
\newblock \href{https://dx.doi.org/10.1103/PhysRevResearch.2.023116}{Physical
  Review Research {\bf 2}, 023116}~(2020).

\bibitem{Cohen}
Lawrence~Z Cohen, Isaac~H Kim, Stephen~D Bartlett, and Benjamin~J Brown.
\newblock ``Low-overhead fault-tolerant quantum computing using long-range
  connectivity''.
\newblock \href{https://dx.doi.org/10.1126/sciadv.abn1717}{Science Advances
  {\bf 8}, eabn1717}~(2022).

\bibitem{zeng}
Bei Zeng, Andrew Cross, and Isaac~L. Chuang.
\newblock ``Transversality versus universality for additive quantum codes''.
\newblock \href{https://dx.doi.org/10.1109/TIT.2011.2161917}{IEEE Transactions
  on Information Theory {\bf 57}, 6272--6284}~(2011).

\bibitem{Eastin}
Bryan Eastin and Emanuel Knill.
\newblock ``Restrictions on transversal encoded quantum gate sets''.
\newblock \href{https://dx.doi.org/10.1103/PhysRevLett.102.110502}{Physical
  Review Letters {\bf 102}, 110502}~(2009).

\bibitem{bravyikoenig}
Sergey Bravyi and Robert K\"onig.
\newblock ``Classification of topologically protected gates for local
  stabilizer codes''.
\newblock \href{https://dx.doi.org/10.1103/PhysRevLett.110.170503}{Phys. Rev.
  Lett. {\bf 110}, 170503}~(2013).

\bibitem{pastawski}
Fernando Pastawski and Beni Yoshida.
\newblock ``Fault-tolerant logical gates in quantum error-correcting codes''.
\newblock \href{https://dx.doi.org/10.1103/PhysRevA.91.012305}{Phys. Rev. A
  {\bf 91}, 012305}~(2015).

\bibitem{jochym2018disjointness}
Tomas Jochym-O’Connor, Aleksander Kubica, and Theodore~J Yoder.
\newblock ``Disjointness of stabilizer codes and limitations on fault-tolerant
  logical gates''.
\newblock \href{https://dx.doi.org/10.1103/PhysRevX.8.021047}{Physical Review X
  {\bf 8}, 021047}~(2018).

\bibitem{Webster}
Paul Webster, Michael Vasmer, Thomas~R Scruby, and Stephen~D Bartlett.
\newblock ``Universal fault-tolerant quantum computing with stabilizer codes''.
\newblock \href{https://dx.doi.org/10.1103/PhysRevResearch.4.013092}{Physical
  Review Research {\bf 4}, 013092}~(2022).

\bibitem{bravyi14}
Sergey Bravyi and Matthew~B Hastings.
\newblock ``Homological product codes''.
\newblock In Proceedings of the 46th Annual ACM Symposium on Theory of
  Computing.
\newblock \href{https://dx.doi.org/10.1145/2591796.2591870}{Pages 273--282}.
\newblock ACM~(2014).

\bibitem{Jochym-O'Connor2}
Tomas Jochym-O'Connor.
\newblock ``Fault-tolerant gates via homological product codes''.
\newblock \href{https://dx.doi.org/10.22331/q-2019-02-04-120}{Quantum {\bf 3},
  120}~(2019).

\bibitem{Quintavalle2}
Armanda~O Quintavalle, Michael Vasmer, Joschka Roffe, and Earl~T Campbell.
\newblock ``Single-shot error correction of three-dimensional homological
  product codes''.
\newblock \href{https://dx.doi.org/10.1103/PRXQuantum.2.020340}{PRX Quantum
  {\bf 2}, 020340}~(2021).

\bibitem{jochym2021four}
Tomas Jochym-O'Connor and Theodore~J Yoder.
\newblock ``Four-dimensional toric code with {non-Clifford} transversal
  gates''.
\newblock \href{https://dx.doi.org/10.1103/PhysRevResearch.3.013118}{Physical
  Review Research {\bf 3}, 013118}~(2021).

\bibitem{evra2022decodable}
Shai Evra, Tali Kaufman, and Gilles Z{\'e}mor.
\newblock ``Decodable quantum {LDPC} codes beyond the $\sqrt{n}$ distance
  barrier using high-dimensional expanders''.
\newblock \href{https://dx.doi.org/10.1137/20M1383689}{SIAM Journal on
  ComputingPages FOCS20--276}~(2022).

\bibitem{Yoder}
Theodore~J Yoder, Ryuji Takagi, and Isaac~L Chuang.
\newblock ``Universal fault-tolerant gates on concatenated stabilizer codes''.
\newblock \href{https://dx.doi.org/10.1103/PhysRevX.6.031039}{Physical Review X
  {\bf 6}, 031039}~(2016).

\bibitem{kovalev2012improved}
Alexey~A Kovalev and Leonid~P Pryadko.
\newblock ``Improved quantum hypergraph-product ldpc codes''.
\newblock In 2012 IEEE International Symposium on Information Theory
  Proceedings.
\newblock \href{https://dx.doi.org/10.1109/ISIT.2012.6284206}{Pages 348--352}.
\newblock IEEE~(2012).

\bibitem{Mackay}
David~JC MacKay.
\newblock ``Information theory, inference and learning algorithms''.
\newblock Cambridge University Press. ~(2003).

\bibitem{calderbank1996good}
A~Robert Calderbank and Peter~W Shor.
\newblock ``Good quantum error-correcting codes exist''.
\newblock \href{https://dx.doi.org/10.1103/PhysRevA.54.1098}{Physical Review A
  {\bf 54}, 1098}~(1996).

\bibitem{steane1996multiple}
Andrew Steane.
\newblock ``Multiple-particle interference and quantum error correction''.
\newblock \href{https://dx.doi.org/10.1098/rspa.1996.0136}{Proceedings of the
  Royal Society of London. Series A: Mathematical, Physical and Engineering
  Sciences {\bf 452}, 2551--2577}~(1996).

\bibitem{nielsen2002quantum}
Michael~A Nielsen and Isaac Chuang.
\newblock ``Quantum computation and quantum information''.
\newblock Cambridge University Press. ~(2002).

\bibitem{tillich14}
Jean-Pierre Tillich and Gilles Z{\'e}mor.
\newblock ``Quantum {LDPC} codes with positive rate and minimum distance
  proportional to the square root of the blocklength''.
\newblock \href{https://dx.doi.org/10.1109/TIT.2013.2292061}{IEEE Transactions
  on Information Theory {\bf 60}, 1193--1202}~(2014).

\bibitem{audoux2015tensor}
Benjamin Audoux and Alain Couvreur.
\newblock ``On tensor products of {CSS} codes''~(2015).
\newblock  \href{http://arxiv.org/abs/1512.07081}{arXiv:1512.07081}.

\bibitem{quintavalle2022reshape}
Armanda~O Quintavalle and Earl~T Campbell.
\newblock ``Reshape: A decoder for hypergraph product codes''.
\newblock \href{https://dx.doi.org/10.1109/TIT.2022.3184108}{IEEE Transactions
  on Information Theory {\bf 68}, 6569--6584}~(2022).

\bibitem{kovalev2013quantum}
Alexey~A Kovalev and Leonid~P Pryadko.
\newblock ``Quantum kronecker sum-product low-density parity-check codes with
  finite rate''.
\newblock
  \href{https://dx.doi.org/https://doi.org/10.1103/PhysRevA.88.012311}{Physical
  Review A {\bf 88}, 012311}~(2013).

\bibitem{Burton}
Simon Burton and Dan Browne.
\newblock ``Limitations on transversal gates for hypergraph product codes''.
\newblock \href{https://dx.doi.org/10.1109/TIT.2021.3131043}{IEEE Transactions
  on Information Theory {\bf 68}, 1772--1781}~(2022).

\bibitem{kubica2015unfolding}
Aleksander Kubica, Beni Yoshida, and Fernando Pastawski.
\newblock ``Unfolding the color code''.
\newblock \href{https://dx.doi.org/10.1088/1367-2630/17/8/083026}{New Journal
  of Physics {\bf 17}, 083026}~(2015).

\bibitem{Moussa2016}
Jonathan~E Moussa.
\newblock ``Transversal {Clifford} gates on folded surface codes''.
\newblock \href{https://dx.doi.org/10.1103/PhysRevA.94.042316}{Physical Review
  A {\bf 94}, 042316}~(2016).

\bibitem{vasmer2019three}
Michael Vasmer and Dan~E Browne.
\newblock ``Three-dimensional surface codes: Transversal gates and
  fault-tolerant architectures''.
\newblock \href{https://dx.doi.org/10.1103/PhysRevA.100.012312}{Physical Review
  A {\bf 100}, 012312}~(2019).

\bibitem{breuckmann2022fold}
Nikolas~P Breuckmann and Simon Burton.
\newblock ``Fold-transversal {Clifford} gates for quantum codes''~(2022).
\newblock  \href{http://arxiv.org/abs/2202.06647}{arXiv:2202.06647}.

\bibitem{Laflamme}
Raymond Laflamme, Cesar Miquel, Juan~Pablo Paz, and Wojciech~Hubert Zurek.
\newblock ``Perfect quantum error correcting code''.
\newblock \href{https://dx.doi.org/10.1103/PhysRevLett.77.198}{Physical Review
  Letters {\bf 77}, 198}~(1996).

\bibitem{Chao}
Rui Chao and Ben~W Reichardt.
\newblock ``Fault-tolerant quantum computation with few qubits''.
\newblock \href{https://dx.doi.org/10.1038/s41534-018-0085-z}{npj Quantum
  Information {\bf 4}, 1--8}~(2018).

\bibitem{gottesman1999}
Daniel Gottesman and Isaac~L. Chuang.
\newblock ``Demonstrating the viability of universal quantum computation using
  teleportation and single-qubit operations''.
\newblock \href{https://dx.doi.org/10.1038/46503}{Nature {\bf 402},
  390--393}~(1999).

\bibitem{zhou2000}
Xinlan Zhou, Debbie~W. Leung, and Isaac~L. Chuang.
\newblock ``Methodology for quantum logic gate construction''.
\newblock \href{https://dx.doi.org/10.1103/PhysRevA.62.052316}{Physical Review
  A {\bf 62}, 052316}~(2000).

\bibitem{Bravyi}
Sergey Bravyi and Alexei Kitaev.
\newblock ``Universal quantum computation with ideal {Clifford} gates and noisy
  ancillas''.
\newblock \href{https://dx.doi.org/10.1103/PhysRevA.71.022316}{Physical Review
  A {\bf 71}, 022316}~(2005).

\bibitem{knill1996threshold}
Emanuel Knill, Raymond Laflamme, and Wojciech Zurek.
\newblock ``Resilient quantum computation''.
\newblock \href{https://dx.doi.org/10.1126/science.279.5349.342}{Science {\bf
  279}, 342--345}~(1996).

\bibitem{steane1999}
A.M. Steane.
\newblock ``Quantum {{Reed-Muller}} codes''.
\newblock \href{https://dx.doi.org/10.1109/18.771249}{IEEE Transactions on
  Information Theory {\bf 45}, 1701--1703}~(1999).

\bibitem{anderson2014}
Jonas~T. Anderson, Guillaume {Duclos-Cianci}, and David Poulin.
\newblock ``Fault-tolerant conversion between the {Steane} and {Reed-Muller}
  quantum codes''.
\newblock \href{https://dx.doi.org/10.1103/PhysRevLett.113.080501}{Physical
  Review Letters {\bf 113}, 080501}~(2014).

\bibitem{Bravyi2}
Sergey Bravyi and Jeongwan Haah.
\newblock ``Magic-state distillation with low overhead''.
\newblock \href{https://dx.doi.org/10.1103/PhysRevA.86.052329}{Physical Review
  A {\bf 86}, 052329}~(2012).

\bibitem{Litinski}
Daniel Litinski.
\newblock ``Magic state distillation: Not as costly as you think''.
\newblock \href{https://dx.doi.org/10.22331/q-2019-12-02-205}{Quantum {\bf 3},
  205}~(2019).

\bibitem{Chamberland}
Christopher Chamberland and Kyungjoo Noh.
\newblock ``Very low overhead fault-tolerant magic state preparation using
  redundant ancilla encoding and flag qubits''.
\newblock \href{https://dx.doi.org/10.1038/s41534-020-00319-5}{npj Quantum
  Information {\bf 6}, 1--12}~(2020).

\bibitem{Hastings}
Matthew~B Hastings.
\newblock ``Weight reduction for quantum codes''.
\newblock \href{https://dx.doi.org/10.26421/QIC17.15-16-4}{Quantum Information
  \& Computation {\bf 17}, 1307–1334}~(2016).

\bibitem{Hastings2}
MB~Hastings.
\newblock ``On quantum weight reduction''~(2021).
\newblock  \href{http://arxiv.org/abs/2102.10030}{arXiv:2102.10030}.

\bibitem{Panteleev2}
Pavel Panteleev and Gleb Kalachev.
\newblock ``Degenerate quantum {LDPC} codes with good finite length
  performance''.
\newblock \href{https://dx.doi.org/10.22331/q-2021-11-22-585}{Quantum {\bf 5},
  585}~(2021).

\bibitem{Hastings3}
Matthew~B Hastings, Jeongwan Haah, and Ryan O'Donnell.
\newblock ``Fiber bundle codes: breaking the {$N^{1/2}\text{polylog}(N)$}
  barrier for quantum {LDPC} codes''.
\newblock In Proceedings of the 53rd Annual ACM SIGACT Symposium on Theory of
  Computing.
\newblock \href{https://dx.doi.org/10.1145/3406325.3451005}{Pages 1276--1288}.
\newblock ~(2021).

\bibitem{Breuckmann2}
Nikolas~P Breuckmann and Jens~N Eberhardt.
\newblock ``Balanced product quantum codes''.
\newblock \href{https://dx.doi.org/10.1109/TIT.2021.3097347}{IEEE Transactions
  on Information Theory {\bf 67}, 6653--6674}~(2021).

\bibitem{leverrier2022quantum}
Anthony Leverrier and Gilles Z{\'e}mor.
\newblock ``Quantum tanner codes''.
\newblock In 2022 IEEE 63rd Annual Symposium on Foundations of Computer Science
  (FOCS).
\newblock \href{https://dx.doi.org/10.1109/FOCS54457.2022.00117}{Pages
  872--883}.
\newblock IEEE~(2022).

\bibitem{MAGMA}
Wieb Bosma, John Cannon, and Catherine Playoust.
\newblock ``The {Magma} algebra system {I}: The user language''.
\newblock \href{https://dx.doi.org/10.1006/jsco.1996.0125}{Journal of Symbolic
  Computation {\bf 24}, 235--265}~(1997).

\bibitem{Grassl}
Markus Grassl.
\newblock ``Bounds on the minimum distance of linear codes''~(2008).
\newblock \url{http://www.codetables.de}.

\end{thebibliography}

\appendix
\section{Proofs}
\label{app:proof}
In this Section we introduce the strongly lower triangular Gaussian reduction for binary matrices (\Cref{algo:strong_triangular}) and prove its correctness. 

\Cref{algo:strong_triangular} takes in input a binary matrix $H$ and outputs a strongly lower triangular basis for its kernel, as per \Cref{def:strong_triang}, and a unit vector basis for its complement \footnote{We remind the reader that, given a vector space $V \subseteq \ff^n$, its complement $V^{\bullet}$ is any vector space such that $V \oplus V^{\bullet} = \ff^n$, see the proof of \Cref{lemma:basis}.}. It iterates over the columns of $H$ (line \ref{for:first}) and iteratively removes from the set of columns indices (line \ref{if:pivots}) the independent columns; the matrix of interest, $K$, which will contain the column vectors forming a basis of the kernel of $H$ is iteratively reduced in triangular form (line \ref{for:second}). We prove that \Cref{algo:strong_triangular} is correct in \Cref{lemma:correctness} below.
\begin{algorithm}[H]
\caption{Strong lower triangular Gaussian reduction.}
\begin{algorithmic}[1]
\Require An $m \times n$ binary matrix $H$.
\newline
\Ensure A strongly triangular basis of its kernel $\mathcal K$ and a unit vector basis $\mathcal F$ of $(\im H^T)^{\bullet}$, where $\ff^n = \im H^T \oplus (\im H^T)^{\bullet}$. 
\newline
\State{$\hat H\gets H$}
\State{$K \gets I_n$}
\State{$\pi(\mathcal{K}) = \{1, \dots, n\}$}
\For{$j\gets 1$ \textbf{to} $n$} \label{for:first}
    \While {$\hat H[i][j] \neq 1$ and $1 \le i \le m$}
    \State{$i \gets i+1$}
    \EndWhile
    \If{$\hat H[i][j] = 1$}: \label{if:pivots}
        \State{$\pi(\mathcal{K}) \gets \pi(\mathcal{K}) \setminus \{j\}$}
            \For{$\ell \gets j+ 1$ \textbf{to} $n$} \label{for:second}
            \If{$\hat H[i][\ell] = 1$} \label{if:ker}
                \State{$\hat H^\ell \gets \hat H^j + \hat H^\ell$}
                \State{$K^{\ell} \gets K^j + K^{\ell}$} \label{state:ker}
                \EndIf
                \EndFor
                \EndIf
                \EndFor
\State{$\mathcal{K} \gets \{K^j : j \in \pi(\mathcal{K})\} $}
\State{$F \gets \{I_n^j : j \in \pi(\mathcal{K})\}$}
\State\Return{$\mathcal{K}$, $\mathcal{F}$}
\end{algorithmic}
\label{algo:strong_triangular}
\end{algorithm}
\begin{lemma}
\label{lemma:correctness} 
\Cref{algo:strong_triangular} terminates and is correct. More precisely:
\begin{enumerate}[1)]
    \item The set $\mathcal{H} = \{H^j : 1\le j \le n, \, j \not\in \pi(\mathcal{K})\}$ is a basis of the column span of $H$.
    \item The set $\mathcal{K}$ is a basis of the kernel of $H$.
    \item The set $\mathcal{F}$ is a unit-vector basis of $(\im H^T)^{\bullet}$. 
\end{enumerate}
\end{lemma}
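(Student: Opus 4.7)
The plan is to set up two invariants that the algorithm maintains and then verify the three claims in order. I would begin by observing that $\hat{H} = H \cdot K$ holds at every step: it is true initially (since $\hat{H} = H$ and $K = I_n$) and every column operation is mirrored on both matrices, so each amounts to right-multiplying both by the same elementary matrix $I_n + E_{j, \ell}$. A parallel induction shows that $K$ remains upper triangular with $1$'s on the diagonal---the source column $K^{j}$ has nonzero entries only at row indices $\le j$, so adding it to $K^{\ell}$ with $\ell > j$ creates no entries at rows $> \ell$ and does not disturb $K^{\ell}_{\ell}=1$. Consequently $K$ is invertible at all times, so $\rank \hat{H} = \rank H$ throughout. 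Termination is immediate from the nested-loop bounds, giving an $O(n^3)$ runtime.

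For Claim 2, I would observe that when $j \in \pi(\mathcal{K})$ the pivot search returns no row, so $\hat{H}^{j}$ is the zero column at that moment; since later iterations only modify columns of index $> j$, $\hat{H}^{j}$ remains zero at termination, and the invariant $\hat{H} = HK$ yields $H K^{j} = 0$. The upper-triangular structure of $K$ makes the $K^{j}$ for $j \in \pi(\mathcal{K})$ linearly independent, and the count $|\pi(\mathcal{K})| = n - \rank \hat{H} = n - \rank H = \dim \ker H$ closes the claim. For Claim 1, the key point is that the algorithm only adds earlier columns into later ones, so for every $j$ the flag $\Span\{H^{1}, \ldots, H^{j}\} = \Span\{\hat{H}^{1}, \ldots, \hat{H}^{j}\}$ is preserved. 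Hence $H^{j}$ is linearly independent of $H^{1}, \ldots, H^{j-1}$ iff $\hat{H}^{j}$ introduces a new pivot row, iff $j \notin \pi(\mathcal{K})$; this makes $\mathcal{H}$ a set of $\rank H$ linearly independent columns of $H$, hence a basis of $\im H$.

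The heart of the proof is Claim 3, which requires the strong lower-triangular structure of $\mathcal{K}$, and this is the step I expect to be the main obstacle. The technical core is the invariant $K^{\ell}_{j} = 0$ for every $j \in \pi(\mathcal{K})$ and every $\ell > j$. I would prove it by induction on column operations: an update $K^{\ell} \gets K^{j''} + K^{\ell}$ with $j'' < \ell$ changes $K^{\ell}_{j}$ by $K^{j''}_{j}$; the case $j'' > j$ is handled by the inductive hypothesis, the case $j'' < j$ by upper-triangularity of $K$, and the case $j'' = j$ cannot occur because $j \in \pi(\mathcal{K})$ means column $j$ is never used as a source. Combined with $K^{j}_{j} = 1$, this shows $K^{j'}_{j} = \delta_{j, j'}$ for all $j, j' \in \pi(\mathcal{K})$. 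To finish, suppose $v = \sum_{j \in \pi(\mathcal{K})} c_{j} e_{j}$ lies in $\Span(\mathcal{F}) \cap \im H^{T} = \Span(\mathcal{F}) \cap (\ker H)^{\perp}$; pairing with each $K^{j'}$ gives $c_{j'} = 0$, so the intersection is trivial, and matching dimensions $|\mathcal{F}| = n - \rank H = \dim (\im H^{T})^{\bullet}$ gives $\ff^{n} = \im H^{T} \oplus \Span(\mathcal{F})$. Once the structural invariant on $K$ is secured, the rest is routine Gaussian-elimination bookkeeping.
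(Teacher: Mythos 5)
Your proof is correct and follows the same structural route as the paper's: characterise the kernel columns of $K$ via the vanishing of $\hat H$ on those columns, prove strong triangularity by arguing that a pivot column of $K$ (a column indexed by $\pi(\mathcal{K})$) is never used as a source of a column operation, and prove Claim~3 by showing $\Span(\mathcal{F})$ has trivial intersection with $\im H^T$. What you add is rigour in places the paper leaves implicit. You state the invariant $\hat H = HK$ explicitly (the paper leans on it silently in ``zero columns of $\hat H$ are kernel vectors''); you carry out the induction establishing both upper-triangularity of $K$ and the finer property $K^{\ell}_{j}=0$ for $j\in\pi(\mathcal K)$, $\ell>j$, whereas the paper's counterexample argument (``$K^{\alpha}$ must have been added to $K^{\alpha+\tau}$ at step $\alpha$'') leaves unaddressed the possibility that the offending entry arose through a chain of later additions, a gap your induction closes. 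For Claim~1 you also supply linear independence via the flag $\Span\{H^1,\dots,H^j\}=\Span\{\hat H^1,\dots,\hat H^j\}$ and the count $\lvert\mathcal H\rvert = \rank H$, whereas the paper's substitution argument only establishes that $\mathcal H$ spans. For Claim~3 your pairing $\langle v,K^{j'}\rangle = c_{j'}$ with an arbitrary linear combination $v=\sum_j c_j e_j$ is the correct formulation; the paper's one-line statement ``$\langle k,f\rangle = 1$ for any $k\in\mathcal K$ and $f\in\mathcal F$'' is literally true only for matching indices (it is $\delta_{j,j'}$), and the paper relies on the reader to recover the linear-combination argument that it does spell out in the proof of its Lemma~1. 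So: same approach, but yours patches real (if minor) gaps and would survive a hostile referee better.
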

\begin{proof}
Let $v$ be any vector in the column span of $H$:
\begin{align*}
    v = \sum_{j \in V} H^j.
\end{align*}
If $V \cap \pi(\mathcal K) = \emptyset $ then there is nothing to prove. Conversely, suppose that there exists $j \in V$ such that $j \in  \pi(\mathcal{K})$. Then it must be $\hat H^j = 0$ (see line \ref{if:pivots}). In other words, the $j$th column can be written as a linear combination of other columns of $H$, and by substitution if necessary, point 1) is proved. 

To prove point 2), observe that for what said on the zero columns of $\hat H$, all the vectors in $\mathcal{K}$ are in the kernel of $H$. Moreover, they are linearly independent and lower triangular by construction. Now suppose that the matrix is not strongly lower triangular (e.g.\ $\begin{pmatrix}
    1 & 1\\
    0 & 1
\end{pmatrix}$) so that there exists a row index $i$ such that $K_{i, \alpha} = K_{i, \alpha+\tau} = 1$ for some $\tau > 0$. This means that the $\alpha$th column $K^{\alpha}$ has been added to the $(\alpha+\tau)$th column $K^{\alpha+\tau}$ at step $\alpha$. However, if $\alpha \in \pi(\mathcal{K})$ is a pivot index, $\hat H^\alpha$ is zero at steps $\alpha -1, \dots, n$ and therefore the if condition at line \ref{if:ker} is not met.

Point 3) follows observing that the vectors in $\mathcal{F}$ are linearly independent by construction (they are unit vectors with different pivots) and none of them belongs to the row span of $H$. In fact by definition, any vector in the kernel of $H$ is orthogonal to vectors in the row span of $H$, and by strong lower triangularity, $\langle k, f\rangle  = 1$ for any $k \in \mathcal{K}$ and $f \in \mathcal{F}$. 
\end{proof}
\section{Symmetric hypergraph product codes - Examples}
\label{app:matrices}
We here provide \cref{tab:matrices} with some examples of classical seed matrices to construct symmetric hypergraph product codes with $n<1000$ and maximum stabiliser weight $w\leq 16$. The classical parity check matrices  used in the product were found with assistance from the ``Best Known Linear Code'' database of MAGMA~\cite{MAGMA} and guidance from the bounds in~\cite{Grassl}.
For each parity check matrix $H$ in the first column, we have computed the $[\![n, k, d]\!]$ parameters the symmetric hypergraph product code $\syhgp(H^TH)$.

\begin{table}[h!]
  \begin{tabular}{| c | c | c | c | c | c |}
    \hline
   {Classical Code Parity Check}  & {$n$} & {$k$} & {$k/n$} & {$d$} & {$w$} \\ \hline
\tiny{$\begin{matrix} &&&&&& \\ 1&1&1&0&1&0&0\\1&0&1&1&0&1&0\\0&1&1&1&0&0&1 \\ &&&&&& \end{matrix}$} & 98 & 32 & 0.33 & 3 & 8\\ \hline
\tiny{$\begin{matrix}       &   &  &   &  & & &  &  &   &\\ 0  &   0   &  0  &   1   &  1  &   1  &   0   &  1   &  0  &   0 &    0\\
     0   &  1 &    1 &    1   &  1   &  0  &   1 &    0&     1&     0   &  0\\
     1 &    0  &   1  &   1   &  0  &   1  &   1   &  0  &   0   &  1  &  0\\
     1  &   1   &  0  &   0&    1 &    1 &    1 &    0  &   0 &    0  &   1 \\  &   &  &   &  & & &  &  &   & \end{matrix}$} & 242 & 98 & 0.41 & 3 &12 \\ \hline
\tiny{$\begin{matrix}    &&&&&&  &&&&&&&&\\        0  &   0   &  0  &   1 &    1 &    1&     1  &   1   &  1 &    1  &   0  &   1 &    0 &    0  &   0\\
     0 &    1  &   1  &   0  &   0  &   1 &    1   &  1  &   1 &    0   &  1  &   0&     1  &   0 &    0\\
     1  &   0  &   1  &   0  &   1  &   0  &   1   &  1 &    0   &  1 &    1 &    0  &   0  &   1&     0\\
     1  &   1   &  0   &  1  &   0    & 0     &1     &0     &1&     1   &  1  &   0   &  0   &  0 &    1 \\ &&&&&&  &&&&&&&& \end{matrix}$} & 450 & 242 & 0.54 & 3 &16 \\ \hline
\tiny{$\begin{matrix} &&&&&& \\ 1&1&0&1&0&0&0\\
1&0&1&0&1&0&0\\
0&1&1&0&0&1&0\\
1&1&1&0&0&0&1\\
&&&&&& \end{matrix}$} & 98 & 18 & 0.18 & 4 &8\\ \hline
\tiny{$\begin{matrix} &&&&&&&&&&&\\
1&1&1&0&0&0&0&1&0&0&0&0\\
0&0&0&1&1&1&0&0&1&0&0&0\\
1&1&0&1&1&0&1&0&0&1&0&0\\
1&0&1&1&0&1&1&0&0&0&1&0\\
0&1&1&0&1&1&1&0&0&0&0&1 \\
&&&&&&&&&&&
\end{matrix}$} & 288 & 98 & 0.34 & 4 &12\\ \hline
\tiny{$\begin{matrix}       
   &  &   &  & &   &   &  & &   \\
 1   &  1 &    0 &  1&     0   &  0   &  0 &    0    & 0  &   0\\
     1 &    0   &  1   &  0   &  1     &0   &  0&    0&     0 &    0\\
     0  &   1 &    1  &   0 &    0   &  1 &    0  &   0   &  0  &   0\\
     1  &   1  &   1 &    0 &    0  &   0   &  1  &   0 &    0&     0\\
     1  &   0 &    0   &  0 &    0&     0  &   0   &  1 &   0   &  0\\
     0  &   1 &    0 &    0 &    0&    0    & 0 &    0 &    1  &   0\\
     0  &   0   &  1  &   0   &  0   &  0 &    0   &  0  &   0   &  1 \\
 &  &   &  & &   &   &  & &   
     \end{matrix}$} & 200 & 18 & 0.09 & 5 &8 \\ \hline
\tiny{$\begin{matrix}      
&&	&&&&	&	&		&	&\\
 0	&1&	1	&0	&1	&0&	0&	0&	0	&0	&0\\
1&	0&	1&	0&	0&	1&	0&	0&	0&	0&	0\\
1&	1&	1&	0&	0&	0&	1&	0&	0&	0&	0\\
1&	1&	0&	1&	0&	0&	0&	1&	0&	0&	0\\
1&	0&	1&	1&	0&	0&	0&	0&	1&	0&	0\\
0&	1&	1&	1&	0&	0&	0&	0&	0&	1&	0\\
1&	1&	1&	1&	0&	0&	0&	0&	0&	0&	1 \\
&&	&&&&	&	&		&	&
\end{matrix}$} & 242 & 32 & 0.13 & 5 &16 \\ \hline
\tiny{$\begin{matrix}      
&&&&&&&&&&&&&\\
1&1&1&0&1&0&0&0&0&0&0&0&0&0\\
1&1&0&1&0&1&0&0&0&0&0&0&0&0\\
1&0&1&1&0&0&1&0&0&0&0&0&0&0\\
0&1&1&1&0&0&0&1&0&0&0&0&0&0\\
0&0&1&1&0&0&0&0&1&0&0&0&0&0\\
1&1&1&1&0&0&0&0&0&1&0&0&0&0\\
1&0&0&1&0&0&0&0&0&0&1&0&0&0\\
0&1&0&1&0&0&0&0&0&0&0&1&0&0\\
1&0&1&0&0&0&0&0&0&0&0&0&1&0\\
0&1&1&0&0&0&0&0&0&0&0&0&0&1\\
&&&&&&&&&&&&&
 \end{matrix}$} & 392 & 32 & 0.08 & 7 &16 \\ \hline
\tiny{
$\begin{matrix}      
&&&&&&&&&&&&&&&&&&\\
1&0&0&0&1&0&0&0&0&0&0&0&0&0&0&0&0&0&0\\
0&1&0&0&0&1&0&0&0&0&0&0&0&0&0&0&0&0&0\\
0&0&1&0&0&0&1&0&0&0&0&0&0&0&0&0&0&0&0\\
0&0&0&1&0&0&0&1&0&0&0&0&0&0&0&0&0&0&0\\
1&1&0&0&0&0&0&0&1&0&0&0&0&0&0&0&0&0&0\\
1&0&1&0&0&0&0&0&0&1&0&0&0&0&0&0&0&0&0\\
1&0&0&1&0&0&0&0&0&0&1&0&0&0&0&0&0&0&0\\
0&1&0&1&0&0&0&0&0&0&0&1&0&0&0&0&0&0&0\\
0&1&1&0&0&0&0&0&0&0&0&0&1&0&0&0&0&0&0\\
0&0&1&1&0&0&0&0&0&0&0&0&0&1&0&0&0&0&0\\
1&1&1&0&0&0&0&0&0&0&0&0&0&0&1&0&0&0&0\\
1&1&0&1&0&0&0&0&0&0&0&0&0&0&0&1&0&0&0\\
1&0&1&1&0&0&0&0&0&0&0&0&0&0&0&0&1&0&0\\
0&1&1&1&0&0&0&0&0&0&0&0&0&0&0&0&0&1&0\\
1&1&1&1&0&0&0&0&0&0&0&0&0&0&0&0&0&0&1 \\
&&&&&&&&&&&&&&&&&&
\end{matrix}$} & 722 & 32 & 0.04 & 9 &16 \\ \hline
  \end{tabular}
\caption{Examples of symmetric hypergraph product codes $\syhgp(H^TH)$, where $H$ is the binary matrix in the first column. The length $n$, the dimension $k$, the rate $k/n$, the distance $d$ and the maximum stabiliser weight $w$ are reported in the other columns.
\label{tab:matrices}}
\end{table}

\end{document}